\documentclass[a4paper, 12pt]{article}

\usepackage{latexsym,amsthm,amsmath}
\usepackage[T1]{fontenc}
\usepackage{amssymb}
\usepackage{graphicx,graphpap}
\usepackage{color}
\usepackage{dsfont}
\newcommand{\vc}[1]{\boldsymbol{#1}}
\newcommand{\p}{\textrm{P}}

\newcommand{\e}{\textrm{E}}
\newcommand{\dem}{\noindent\textbf{Proof. \,}}
\newcommand{\cqfd}{\hfill $\square$}

\numberwithin{equation}{section}
\theoremstyle{plain}

\newtheorem{lem}{Lemma}[section]
\newtheorem{prop}{Proposition}[section]
\newtheorem{cor}{Corollary}[section]
\newtheorem{rem}{Remark}[section]

\begin{document}

\title{Fitting Markovian binary trees using global and individual demographic data}
\author{Sophie Hautphenne\footnote{The University of Melbourne and Ecole polythechnique f\'ed\'erale de Lausanne, sophiemh@unimelb.edu.au}, Melanie Massaro\footnote{Charles Sturt University, mmassaro@csu.edu.au}, and Katharine Turner \footnote{Ecole polythechnique f\'ed\'erale de Lausanne, katharine.turner@epfl.ch}}
\maketitle 

%
%
%
%

\begin{abstract}
We consider a class of branching processes called Markovian binary trees, in which the individuals lifetime and reproduction epochs are modeled using a transient Markovian arrival process (TMAP). We estimate the parameters of the TMAP based on population data containing information on age-specific fertility and mortality rates. Depending on the degree of detail of the available data, a weighted non-linear regression method or a maximum likelihood method is applied. We discuss the optimal choice of the number of phases in the TMAP, and we provide confidence intervals for the model outputs. The results are illustrated  using real data on human and bird populations.\\
\textbf{Keywords:} Markovian binary tree; transient Markovian arrival process; Markov modulated Poisson process; parameter estimation; non-linear regression; maximum likelihood; petroica traversi
\end{abstract}

\section{Introduction} Simple birth-and-death processes do not offer enough flexibility to model real biological populations in which the age of individuals impacts on their fertility and mortality rates. The memoryless property inherent to these models implies that individuals do not age. However, they are tractable and amenable to efficient parameter estimation. 
In this paper, we model the lifetime and reproduction epochs of individuals in a population using a \textit{transient Markovian arrival process} (TMAP). Roughly speaking, a TMAP is a point process in which the event rate depends on the state of an underlying transient Markov chain with $n$ transient states (also called \textit{phases}), and one absorbing phase. Each event in the TMAP corresponds to the birth of a child, and the absorption in phase 0 corresponds to the individual's death. The resulting continuous-time branching process, called \textit{Markovian binary tree} (MBT), is the matrix generalisation of the birth-and-death process. It allows for much more flexibility than the latter, while keeping an excellent computational tractability. 

Performance measures of MBTs include the extinction probability of the population, the distributions of the time until extinction, the population size at any given time, and the total progeny size until any given time. 
The MBT model has already been used to efficiently compare demographic properties of female families in different countries, see \cite{hautphenne2012markovian}. 
The motivation behind the present paper is to develop
the statistical tools necessary to fit an MBT to populations of
species for which detailed information about age-specific survival and reproductive rates of individuals is available. The model can then be used to calculate age-dependent demographic properties. By knowing the exact age of individuals of a population, its future survival probability can be assessed, which aids conservation management of endangered species.

We fit a TMAP to different types of datasets which may be available from demographic databases or from studying an animal population in the field. 
These datasets can have different degrees of detail. We distinguish between:
\begin{itemize}\item \textit{Global population data}, consisting of the \textit{average} age-specific fertility and mortality \textit{rates} over an entire population. This sort of data is usually provided in databases on human fertility and mortality.
\item  \textit{Individual demographic data}, consisting of data on age-specific fertility and mortality \textit{counts} for each individual in a population. This sort of data often exists for closely monitored animal species. Here we will use data from a highly threatened bird species, the Chatham Island black robin (\textit{Petroica traversi}) \cite{butlerblack,massaro_sain2013}.
\end{itemize}The parameter estimation method depends on the type of data which are available: in the global population case, we use a weighted non-linear regression method to fit the parameters, and in the individual demographic data case, we use a maximum likelihood method. We consider different validation methods to determine the optimal number of phases $n$ in the TMAP.
Once a value of $n$ is determined and an estimator is found for the model parameters, we derive confidence intervals for the model outputs.

We apply our results to two real-world examples. The weighted non-linear regression method is applied in human demography leading to an improvement of the Markovian model considered in \cite{hautphenne2012markovian}. The maximum likelihood method is applied to the black robin population providing important insights about the species demography to be further discussed from a conservation biology point of view in a subsequent paper.

The paper is organised as follows: in Section~\ref{TMMPP}, we introduce TMAPs and describe the special case that we shall focus on. In Section~\ref{average} we perform model parameter estimation based on the average age-specific fertility and mortality rates, and in Section~\ref{ind}, we estimate the parameters based on individual age-specific fertility and mortality counts. In Section~\ref{num} we apply each method on a real-world example.

\section{Transient Markovian arrival processes}\label{TMMPP}

Transient Markovian arrival processes (TMAPs) are two-dimensional Markovian processes $\{(N(t),
\varphi (t)): t \in \mathbb{R}^{+}\}$ on the state space
$\mathbb{N}\times \{0,1, \ldots, n\}$, where $n$ is finite, combining the \emph{level} process $\{N(t)\}$, which counts the number of arrivals in $[0,t]$, with the \emph{phase} process,  \{$\varphi(t)\}$, which is a continuous-time Markov
chain. 
The states $(k,0)$ are absorbing for all $k\geq 0$; the other states are transient. 

A TMAP is characterized by two $n \times n$ rate matrices ${ D}_0$
and ${ D}_1$ and a non-negative
$n\times 1$ rate vector $\vc{d}$. Feasible transitions are from $(k,i)$ to $(k,j)$, for $k \geq 0$ and $1 \leq i \not= j \leq n$ at the rate $({ D}_0)_{ij}$, or from  $(k,i)$ to $(k+1,j)$ for $1\leq i,j\leq n$ at the rate $({ D}_1)_{ij}$, or from $(k,i)$ to $(k,0)$ at rate $d_i$.  The first transitions (at rate $({ D}_0)_{ij}$) are {\em hidden}: the phase of the individual changes but the level is not incremented.  The second transitions (at rate $({ D}_1)_{ij}$) are {\em observable}: a birth (arrival) is recorded, and the state of the individual may or may not change. The third transitions (at rate $d_i$) indicate the termination of the individual's life.

The matrix ${ D}_1$ and the vector $\vc d$ are nonnegative, ${ D}_0$ has
nonnegative off-diagonal elements and strictly negative elements on
the diagonal such that $ { D}_0\, \vc{1}+{ D}_1\, \vc{1 } + \vc{d} = \vc{0
}$, where $\vc 1$ is an $n\times 1$ vector of ones.  One
also defines the initial probability vector $\vc\alpha = (\alpha_i)_{1\leq i\leq n}$, and we assume that $\vc\alpha \vc 1 =1$, so that
$\varphi(0)\not=0$ a.s. More details on TMAPs can be found in \cite{latouche2003transient}.

There is a total of $p=2n^2+n-1$ entries in the matrices $\vc\alpha, D_0, D_1,\vc d$ if no assumption is made on their structure. A special case of TMAP, called the \textit{acyclic transient Markov modulated Poisson process} (ATMMPP), assumes 
 \begin{itemize}
\item individuals start their lifetime in phase 1 with probability one,
\item they can only move from phase $i$ to phase $i+1$ or to phase 0, with respective rates $\gamma_i$ for $1\leq i \leq n-1$ and $d_i$ for $1\leq i \leq n$, 
\item while in phase $i$, they reproduce at rate $\lambda_i$ and do not make any simultaneous phase transition at reproduction time.
\end{itemize}  
With these assumptions we have
$\vc\alpha=[1,0,\ldots,0]$, $D_1=$diag$(\lambda_1,\ldots,\lambda_n)$, and the only non-zero entries of $D_0$ are $(D_0)_{i,i+1}=\gamma_i$ and $$(D_0)_{ii}=\left\{\begin{array}{ll}\lambda_i-d_i-\gamma_i,& 1\leq i\leq n-1,\\ -\lambda_i-d_i,& i=n.
\end{array}\right.$$ There is a total of $p=3n-1$ parameters in an ATMMPP.

The lifetime distribution of a TMAP is phase-type PH$(\vc\alpha,D_0+D_1),$ see \cite{latouche2003transient}. Due to the structure of $D_0$ and $D_1$ in the ATMMPP case, this corresponds 
to a Coxian distribution. Such distributions are important as any acyclic phase-type distribution has an equivalent Coxian representation. Therefore, in terms of the lifetime distribution, the ATMMPP does not impose much restriction compared to the general TMAP.

\section{Global population data}\label{average}\label{av}\label{average}

\subsection{Available data and model equivalent}
For this section, we assume the available data are
 estimates of the expected age-specific fertility rates, $\hat{b}_x$, and
 estimates of the expected age-specific mortality rates, $\hat{d}_x$,
where $x\in\{0, 1, 2,\ldots, M\}$ denotes the age, that is, the period of time $[x,x+1)$ during the lifetime, and $M$ is the maximal age for which data are available. The method developed in this section can be generalised to $\ell$-year age classes, details are provided in Section \ref{ext} of the Supplementary Material. 

The rates $\hat{b}_x$ and $\hat{d}_x$ are interpreted respectively as the expected number of offspring per year from a parent at age $x$ and the probability that an individual who reached age $x$ dies within the year.
We denote by $\bar{d}(x)$ and $\bar{b}(x)$ the equivalent quantities computed from the TMAP model.
These functions have the following analytic expression, the proof of which is provided in Section \ref{proof_quant} of the Supplementary Material.

\begin{prop}\label{quant} The age-specific mortality and fertility rates in a TMAP with phase transition rate matrix $D:=D_0 +D_1$ are respectively given by
\begin{eqnarray*}
\bar{d}(x)&=& \dfrac{\vc\alpha e^{D x}(I-e^D)\vc 1}{\vc\alpha e^{D x}\vc 1}\\ \bar{b}(x)& =& \dfrac{\vc\alpha e^{Dx} (I-e^{D})(-D)^{-1}D_1\vc 1}{\vc\alpha e^{D x}\vc 1}.
\end{eqnarray*}
\end{prop}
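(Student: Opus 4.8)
The plan is to reduce both quantities to properties of the phase process of the TMAP together with standard facts about phase-type distributions. First I would observe that if one forgets the level $N(t)$, the marginal process $\{\varphi(t)\}$ is a continuous-time Markov chain on $\{0,1,\ldots,n\}$ whose only absorbing state is $0$, and whose restriction to the transient phases $\{1,\ldots,n\}$ has generator $D=D_0+D_1$: both the hidden transitions (off-diagonal entries of $D_0$) and the phase changes accompanying a birth (entries of $D_1$) move the phase among transient states, while $D\vc 1=-\vc d\le\vc 0$ records the absorption rates. Hence the lifetime is PH$(\vc\alpha,D)$, the probability of still being alive at age $x$ is $\ell(x):=\vc\alpha e^{Dx}\vc 1$, and, by the Markov property, conditionally on being alive at age $x$ the phase follows the normalised law $\vc\pi(x):=\vc\alpha e^{Dx}/\ell(x)$. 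Since the absorbing state is reached almost surely, $D$ is a nonsingular sub-generator, a fact needed for the fertility part.

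For $\bar d(x)$ the argument is short: starting from a phase distributed as $\vc\pi(x)$ at age $x$, the probability of being alive one year later is $\vc\pi(x)e^{D}\vc 1$, so $\bar d(x)=1-\vc\pi(x)e^{D}\vc 1=\vc\pi(x)(I-e^{D})\vc 1$ using $\vc\pi(x)\vc 1=1$; substituting $\vc\pi(x)=\vc\alpha e^{Dx}/(\vc\alpha e^{Dx}\vc 1)$ gives the first displayed formula.

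For $\bar b(x)$ I would note that an individual currently in transient phase $j$ produces offspring at total rate $(D_1\vc 1)_j$ (summing over the possible accompanying phase changes), and is sterile after absorption; equivalently, applying Dynkin's formula to the Markov pair $(N(t),\varphi(t))$ with the coordinate function $N$ shows that the arrival counter has compensator $\int_0^t (D_1\vc 1)_{\varphi(s)}\,ds$ (with the convention that this rate is $0$ in phase $0$). Combined with Fubini, the expected number of births in $[x,x+1)$, conditionally on being alive at age $x$ with phase law $\vc\pi(x)$, equals $\int_0^1 \vc\pi(x)e^{Ds}D_1\vc 1\,ds$, where the sub-stochastic matrix $e^{Ds}$ automatically discards the sample paths on which the individual has already died by time $x+s$. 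Evaluating the matrix integral as $\int_0^1 e^{Ds}\,ds=(-D)^{-1}(I-e^{D})=(I-e^{D})(-D)^{-1}$ (the two orderings agreeing because $e^{D}$ and $(-D)^{-1}$ are both power series in $D$) and inserting the definition of $\vc\pi(x)$ yields the second formula.

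The genuinely substantive step is the passage from ``expected number of arrivals in a time window'' to ``integral of the arrival intensity against the defective phase distribution'': this relies on the compensator/martingale description of the MAP (or, alternatively, a direct infinitesimal renewal-type argument) and on interchanging expectation and integration. Everything else — identifying $D$ as the phase sub-generator, computing the exponential integral, the commutation of matrix functions, and the normalisation by $\ell(x)$ — is routine matrix calculus and elementary probability.
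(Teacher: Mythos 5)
Your proof is correct, and its overall skeleton matches the paper's: both treatments obtain $\bar d(x)$ from the phase-type survival function $\vc\alpha e^{Dx}\vc 1$ of the lifetime, and both obtain $\bar b(x)$ by conditioning on the phase at age $x$ (your defective law $\vc\pi(x)=\vc\alpha e^{Dx}/\vc\alpha e^{Dx}\vc 1$) and then applying the formula for the mean number of arrivals in a unit interval. The one genuine difference is how that formula is justified: the paper simply cites Latouche, Remiche and Taylor for $E[N(t)\mid\varphi(0)=i]=[(I-e^{Dt})(-D)^{-1}D_1\vc 1]_i$ and invokes time-homogeneity, whereas you rederive it from scratch via the compensator $\int_0^t (D_1\vc 1)_{\varphi(s)}\,ds$ of the counting process, Fubini, and the evaluation $\int_0^1 e^{Ds}\,ds=(I-e^{D})(-D)^{-1}$. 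Your route is self-contained and makes explicit where the substance lies (the martingale/compensator description of the MAP and the interchange of expectation and integration), at the cost of being longer; the paper's is shorter but outsources exactly that step to a reference. Your remark that the sub-stochastic $e^{Ds}$ automatically kills the contribution of already-dead paths is the right way to see why no separate conditioning on survival within $[x,x+1)$ is needed, and the commutation $(-D)^{-1}(I-e^D)=(I-e^D)(-D)^{-1}$ is handled correctly.
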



\subsection{Parameter estimation}
In \cite{ll07} only death rates were used to fit phase-type lifetime distributions. We extend this approach, estimating the model parameters by minimizing the sum of weighted squared errors
\begin{equation}\label{F}
F=\sum_{x=0}^{M}\left[ (\hat{d}_x-\bar{d}(x))^2+ (\hat{b}_x-\bar{b}(x))^2\right] \hat{S}_x,  
\end{equation} 
where the weights $\hat{S}_x$ are the observed probabilities of survival until age $x$, $$\hat{S}_x=(1-\hat{d}_0)(1-\hat{d}_1)\cdots (1-\hat{d}_{x-1}).$$ As age increases there may be less available data leading to higher variance. These weights balance the potential resulting heteroscedasticity. If the estimated age-specific rates $\hat{d}_x$ and $\hat{b}_x$ are computed as averages of $n_x$ uncorrelated raw observations, another simple choice of weights would be $W_x = n_x$. 

Since the functions $\bar{d}(x)$ and $\bar{b}(x)$ are non-linear in both the input variable $x$ and in the parameters of the TMAP, we are dealing with a weighted non-linear regression. If there is missing information in the data and no estimate exists for $\hat{d}_x$ or $\hat{b}_x$ for some age $x$, then we set the corresponding term $(\hat{d}_x-\bar{d}(x))^2$ or $(\hat{b}_x-\bar{b}(x))^2$ to zero in the sum.


\begin{rem} \emph{The function $\bar{d}(x)$ corresponds to the hazard rate at age $x$ in survival analysis. Several hazard models have been considered to fit mortality data, such as the Gompertz-Makeham, the Siler, and the Heligman-Pollard models \cite{Gage1993}. Similarly, several age-specific fertility models have been studied, including the Hadwiger, the Beta, and the Gamma models \cite{peristera2007modeling}. The functions $\bar{d}(x)$ and $\bar{b}(x)$ are not claimed to provide better fits than these models, however, as opposed to the known mortality and fertility models which are generally studied separately, $\bar{d}(x)$ and $\bar{b}(x)$ are performance measures coming from \emph{the same Markovian model}, and are thus optimised together. The estimated Markovian model then corresponds to the best model fitting both the mortality and fertility data, and can be used to make a complete demographic study of the population, as shown in \cite{hautphenne2012markovian}.}
\end{rem}

\subsection{Goodness of fit and optimal value of $n$}

In the global population data case we estimate the model parameters by minimizing the objective function \eqref{F}. Therefore, a natural choice for the \emph{mean square error} function is
  $$\mathrm{MSE}=E\left[\sum_{x=0}^{M}\left[ (\bar{d}(x)-\hat{\bar{d}}(x))^2+ (\bar{b}(x)-\hat{\bar{b}}(x))^2\right] \bar{S}(x)\right].  $$ Here  $\bar{d}(x)$, $\bar{b}(x)$, and $\bar{S}(x)$ are respectively the age-specific mortality function, the age-specific fertility function, and the age-specific survival function corresponding to the \textit{true} model, and $\hat{\bar{d}}(x)$ and $\hat{\bar{b}}(x) $ are the equivalent functions corresponding to the \textit{estimated} model. 
If we know the true model then the MSE can be estimated through resampling. Alternatively this could be estimated when we are given a collection of datasets each containing global population information.
  
The value of the MSE indicates of the goodness of fit of the model. When the true model is unknown we estimate the optimal number of phases $n$ by minimizing the MSE. When the true model is known, comparing the MSE for different values of $n$ informs us on the sensitivity of the output with respect to the number of phases, as illustrated in Section \ref{details} in the Supplementary Material.

   

\section{Individual demographic data}\label{ind}

\subsection{Available data}

In this section, we assume the data are individual age-specific fertility and mortality counts in successive age-classes of length $\ell>0$\footnote{Successive
age-classes are assumed of equal length, but though
computationally convenient, this is not essential.}. They consist of $N$ vectors (one for each individual) of the type
\begin{equation}\label{vectv}\vc v=[ 6,  \,   8,  \,   -2  , \,  9 ,  \,  0    , \,3 ,  \,  3  , \, -1  ],\end{equation} of variable length, whose entries $v_i, i\geq 1$ are interpreted as follows:
\begin{itemize}\item $v_i=k\in\{0,1,2,\ldots\}$ if the individual had $k$ offspring while in the age-class $[(i-1)\ell,i\ell)$,
\item $v_i=-1$ if the individual died in the previous age-class $[(i-2)\ell,(i-1)\ell)$, possibly after producing some offspring, and\item $v_i=-2$ if the individual was alive at the beginning of the age-class $[(i-1)\ell,i\ell)$ but there is no (or incomplete) information on her progeny in that age-class.
\end{itemize}

\subsection{Parameter estimation}

Based on a sample of i.i.d. individual life vectors $\{\vc v^{(1)},\ldots,\vc v^{(N)}\}$, we maximize the log-likelihood function 
\begin{equation}\label{loglik}\mathcal{L}(\vc\theta; \vc v^{(1)},\ldots,\vc v^{(N)})=\sum_{j=1}^N \log p(\vc v^{(j)}|\vc\theta),\end{equation} where $\vc\theta=\{\vc\alpha, D_0, D_1,\vc d \}$, and $p(\vc v^{(j)}|\vc\theta)$ is 
the probability of observing the individual life vector $\vc v^{(j)}$, under the model parameter $\vc \theta$.  

Let $K=\max_{i,j}\{v_i^{(j)}: 1\leq i,1\leq j\leq N\}$ be the maximum number of offspring per age-class among the individuals in the sample. The probabilities $p(\vc v^{(j)}|\vc\theta)$ can be written as matrix products involving the matrices and vectors $P(k)=(P_{ij}(k))$, $\vc p(k)=(p_i(k))$, $P=(P_{ij})$, and $\vc p=(p_i)$ defined as
\begin{eqnarray}\label{eq1}P_{ij}(k)&:=&P[N(\ell)=k,\varphi(\ell)=j|N(0)=0,\varphi(0)=i],\\p_{i}(k)&:=&P[N(\ell)=k,\varphi(\ell)=0|N(0)=0,\varphi(0)=i],\\P_{ij}&:=&P[\varphi(\ell)=j|\varphi(0)=i]=\sum_{k\geq 0}P_{ij}(k),\\\label{eq4}p_{i}&:=&P[\varphi(\ell)=0|\varphi(0)=i]=\sum_{k\geq 0}p_{i}(k),\end{eqnarray}for $1\leq i,j\leq n$ and $1\leq k\leq K$. 
As an illustrative example, consider the four life vectors $$\vc v^{(1)}=[2,3,1,-1],\;\;\vc v^{(2)}=[2,-2,1,-1],\;\;\vc v^{(3)}=[2,3],\;\;\vc v^{(4)}=[2,-2].$$ By conditioning on the phases of the TMAP at the boundaries of the successive $\ell$-year intervals, the probability of observing these vectors is
$$\begin{array}{rclcrcl}p(\vc v^{(1)}|\vc\theta)&= &\vc\alpha P(2) P(3)\,\vc p(1),&\;& p(\vc v^{(2)}|\vc\theta)&= &\vc\alpha P(2) P\, \vc p(1),\\p(\vc v^{(3)}|\vc\theta)&= &\vc\alpha P(2)\,[P(3)\vc1+\vc p(3)],&\; &
p(\vc v^{(4)}|\vc\theta)&= &\vc\alpha P(2)\vc 1.\end{array}$$
Note that if $\vc v$ is a vector of size $M+1$ with all entries equal to $-2$, then $p(\vc v|\vc\theta)=\vc\alpha P^{M}\vc1$ is the probability that the individual survives at least the first $M$ age-classes.

The quantities defined in \eqref{eq1}--\eqref{eq4} can be computed explicitly, as shown in the next proposition, whose proof is provided in Section \ref{proof} of the Supplementary Material.
\begin{prop}\label{bigprop}For $1\leq k\leq K $, the matrix $P(k)$ and the vector $\vc p(k)$ are given by
\begin{eqnarray*}P(k)&=&(1/k!) (\vc e_k\otimes I) \exp(\mathcal{M}\ell)(\vc e_1^\top\otimes I),\\\vc p(k)&=&(1/k!) (\vc e_k\otimes I)[I-\exp(\mathcal{M}\ell)](-\mathcal{M})^{-1}\,(\vc e_1^\top\otimes I) \,\vc d,\end{eqnarray*}where $\vc e_k$ is the $k$th unit row vector of size $K$, and $$\mathcal{M}=\left[\begin{array}{ccccc} D_0 & &&&\\ D_1&D_0 & &&\\ &2 D_1&D_0 & &  \\ & &\ddots & &\\ &&&K D_1 &D_0\end{array}\right]. $$  The matrix $P$ and vector $\vc p$ are given by
$$P=\exp(D\ell),\qquad \vc p=[I-\exp(D\ell)](-D)^{-1}\vc d.$$
\end{prop}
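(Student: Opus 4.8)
The plan is to derive all four identities from the Kolmogorov backward equations for the ``taboo'' transition functions, and then check that the advertised matrices solve them. Set $P(k,t):=(P_{ij}(k,t))$ with $P_{ij}(k,t)=\p[N(t)=k,\varphi(t)=j\mid N(0)=0,\varphi(0)=i]$, and condition on the first event of the TMAP in $[0,t]$: a hidden jump (off-diagonal part of $D_0$), a birth (matrix $D_1$), or a death (vector $\vc d$, contributing nothing to $P(k,t)$ since thereafter $\varphi\equiv0$). A standard first-step argument gives
\begin{equation*}
\tfrac{d}{dt}P(0,t)=D_0P(0,t),\qquad \tfrac{d}{dt}P(k,t)=D_0P(k,t)+D_1P(k-1,t)\ \ (k\geq1),
\end{equation*}
with $P(k,0)=\delta_{k0}I$. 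In particular $P(0,t)=e^{D_0t}$; and, since the $P(k,t)$ are sub-stochastic with $\sum_{k\geq0}P(k,t)\vc1\leq\vc1$, summing over $k$ shows that $P(t):=\sum_{k\geq0}P(k,t)$ solves $\tfrac{d}{dt}P(t)=DP(t)$, $P(0)=I$, with $D=D_0+D_1$; hence $P=P(\ell)=e^{D\ell}$.

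For the formula for $P(k)$, the key point is that the block bidiagonal structure of $\mathcal M$, with $D_0$ on the diagonal and $jD_1$ on the $j$th sub-diagonal, is exactly tailored to this system. Reading $\tfrac{d}{dt}e^{\mathcal M t}=\mathcal M e^{\mathcal M t}$ block by block, the block $B_k(t)$ of $e^{\mathcal M t}$ sitting in the block-row indexed ``$k$ recorded births'' and block-column indexed ``$0$ recorded births'' satisfies $\tfrac{d}{dt}B_k(t)=D_0B_k(t)+kD_1B_{k-1}(t)$ with $B_k(0)=\delta_{k0}I$; therefore $R(k,t):=B_k(t)/k!$ obeys precisely the backward system above with the same initial data. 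By uniqueness of solutions of linear ODEs, $P(k,t)=R(k,t)$; evaluating at $t=\ell$ and expressing $B_k(\ell)$ through the Kronecker-product selectors $\vc e_k\otimes I$ and $\vc e_1^{\top}\otimes I$ gives the stated expression. (Probabilistically $B_k(t)/k!=\int e^{D_0s_0}D_1e^{D_0s_1}\cdots D_1e^{D_0s_k}\,ds$ over $\{s_i\geq0,\ \sum_i s_i=t\}$, the density of ``$k$ inter-birth sojourns uninterrupted by any other transition'', which is manifestly $P(k,t)$; the integer weights in $\mathcal M$ produce the compensating $k!$.)

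For $\vc p(k)$ and $\vc p$ I would condition on the epoch of death. Since the level freezes at death, a realisation with $N(\ell)=k$, $\varphi(\ell)=0$ is precisely one in which death occurs at some $\tau\in[0,\ell]$ from a transient phase, exactly $k$ offspring having been produced before $\tau$; this gives $\vc p(k)=\int_0^{\ell}P(k,t)\vc d\,dt$ and, summing, $\vc p=\int_0^{\ell}e^{Dt}\vc d\,dt$. As $\mathcal M$ is block lower-triangular with $D_0$ in every diagonal block, and $D$ is the (nonsingular) sub-generator of $\varphi$ restricted to the transient phases, both $-\mathcal M$ and $-D$ are invertible, with $\int_0^{\ell}e^{\mathcal M t}\,dt=(I-e^{\mathcal M\ell})(-\mathcal M)^{-1}$ and $\int_0^{\ell}e^{Dt}\,dt=(I-e^{D\ell})(-D)^{-1}$. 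Substituting these into the two integrals above (using $P(k,t)=B_k(t)/k!$ in the first) yields the remaining two formulas.

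The computations are essentially mechanical once the backward equations are available; the places needing a little care are: (i) matching the factor $1/k!$ with the integer coefficients $jD_1$ of $\mathcal M$, and keeping the block/Kronecker indexing consistent with that used in the statement; (ii) the death-time conditioning behind $\vc p(k)$, where one must verify there is no double counting because all births precede the unique death epoch; and (iii) recording that $\mathcal M$ and $D$ are nonsingular so the integral identities may be written with $(-\mathcal M)^{-1}$ and $(-D)^{-1}$, together with the harmless interchange of $\sum_k$ with differentiation used to obtain $P=e^{D\ell}$. I expect (ii), or more precisely the task of phrasing that conditioning crisply, to be the only genuinely delicate point.
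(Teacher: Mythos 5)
Your proof is correct and follows essentially the same route as the paper: the paper obtains the recursion $\dot B_k=D_0B_k+kD_1B_{k-1}$ by taking successive $z$-derivatives of the generating-function equation $\partial P^*(z,t)/\partial t=(D_0+zD_1)P^*(z,t)$ and setting $z=0$, which is exactly your Kolmogorov system for $k!\,P(k,t)$, and it likewise derives $\vc p(k)=\int_0^\ell P(k,u)\,\vc d\,du$, $P=P^*(1,\ell)=e^{D\ell}$ and $\vc p=\int_0^\ell e^{Du}\vc d\,du$ by conditioning on the death epoch. The only cosmetic difference is that you verify directly that the rescaled first block column of $e^{\mathcal{M}t}$ solves the coefficient-level backward equations and invoke ODE uniqueness, whereas the paper assembles the $z$-derivatives into a single linear ODE system solved by the matrix exponential; both arguments hinge on the same matrix $\mathcal{M}$ and the same $1/k!$ bookkeeping.
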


Our MLE method generalises the results of Davison and Ramesh \cite{davison1996} who considered the parameter estimation of Markov modulated Poisson processes in the binary data case $v_i=\mathds{1}_{\{N(i\ell)-N((i-1)\ell)\geq 1\}}$. To our knowledge, other parameter estimation methods for such processes are based on the observation of the successive inter-event times rather than on the number of events within successive time intervals, see for instance \cite{ramesh} and \cite{ryden}. As confirmed in Figure \ref{f1a}, when the length of the time intervals decreases to zero, the estimates obtained with our method converge to those obtained with the usual method based on the observation of the successive inter-event times.

\subsection{Goodness of fit and optimal number of phases}

We consider three different criteria for choosing the optimal value of the number $n$ of phases. These criteria are compared on numerical examples in Section \ref{details} in the Supplementary Material.

\subsubsection*{{Akaike Information Criterion (AIC)}} We choose the value of $n$ which minimizes the AIC defined as
$$\mathrm{AIC}=2p-2 \mathcal{L}(\hat{\vc\theta}; \vc v^{(1)},\ldots,\vc v^{(N)}),$$ where the number of parameters $p=3n-1$ an ATMMPP model with $n$ phases. This criterion deals with the trade-off between goodness of fit of the model and its complexity. One advantage is that it does not rely on the knowledge of the true model: the value of AIC can be directly computed from the log-likelihood of the estimated model given the data. 


\subsubsection*{{Cross-validation (CV)}} 

We perform a {$K$-fold cross-validation} over the data sample of individual life vectors (with typical value $K=5$). 
The idea is to randomly divide the data into $K$ equal-sized parts. We leave out part $k$, fit the model to the other
$K -1$ parts (combined), and then evaluate the likelihood of the left-out $k$th part (test set) under the estimated parameters. 
We choose the model {maximizing} the mean test likelihood obtained by averaging the results for $k = 1,2,\ldots,K$.
Similar to the AIC, this method does not require us to know the true model. 


\subsubsection*{{Mean squared integrated loss (MSIL)}}
Let $\mathcal{V}$ be the set of all life vectors with entries in $\mathbb{N}\cup\{-1\}$.  
Any life vector with at least one entry equal to $-2$ is interpreted as a (disjoint) union of vectors in $\mathcal{V}$. For any fixed number of phases $n$, and a given sample of life vectors $\{\vc v_1, \vc v_2, \ldots \vc v_N\}$, the MLE method is used to estimate the parameters $\vc\theta_n = \{\vc\alpha, D_0, D_1, \vc d\}$ of the TMAP model. From the estimate $\hat{\vc\theta}_n$ we define a corresponding probability mass function $\hat{f}_n(\cdot)$ over $\mathcal{V}$ as $$\hat{f}_n(\vc v)=p(\vc v|\hat{\vc\theta}_n),\quad \vc v\in \mathcal{V}.$$  

The optimal number of phases is the value of $n$ minimizing the \textit{mean squared integrated loss}, defined as
\begin{eqnarray}\nonumber\mathrm{MSIL}&=&\mathrm{E}\left[\sum_{\vc v\in\mathcal{V}}(f(\vc v)-\hat{f}_n(\vc v))^2\right]\\\label{MSIL}&=&\sum_{\vc v\in\mathcal{V}}f(\vc v)^2 -2\e[\sum_{\vc v\in\mathcal{V}}f(\vc v)\hat{f}_n(\vc v)] + \e[\sum_{\vc v\in\mathcal{V}}\hat{f}_n(\vc v)^2].\end{eqnarray}
Since the first term is independent of $n$, the value of $n$ minimizing ${\textrm{MSIL}}^*(n):=\e[\sum_{\vc v\in\mathcal{V}}\hat{f}_n(\vc v)^2]-2\e[\sum_{\vc v\in\mathcal{V}}f(\vc v)\hat{f}_n(\vc v)] $ also minimizes the MSIL. The problem therefore reduces to estimating ${\textrm{MSIL}}^*(n)$ for each $n$.

If the true model is known, then $f(\cdot)$ is known, and the expectations in ${\textrm{MSIL}}^*(n)$ can be estimated through resampling.
If the true model is unknown, then a $K$-fold cross-validation method can be applied to estimate ${\textrm{MSIL}}^*(n)$. Let $A_k$ and $B_k$ be the $k$th training set and test set, respectively. Let $\hat{f}_n^k(\cdot)$ denote the probability mass function estimator using $n$ phases and training set $A_k$. We have \begin{equation}\label{CV_MSIL}\e\left[\sum_{\vc v\in\mathcal{V}}\hat{f}_n(\vc v)^2\right]\approx \frac{1}{K}\sum_{k=1}^K \sum_{\vc v\in\mathcal{V}}\hat{f}_n^k(\vc v)^2,\end{equation} and since the sets $B_k$ are all drawn from the true distribution $f(\cdot)$, we have $$\e\left[\sum_{\vc v\in\mathcal{V}}f(\vc v)\hat{f}_n(\vc v)\right]\approx\frac{1}{K}\sum_{k=1}^K \frac{1}{|B_k|}\sum_{\vc v\in B_k}\hat{f}_n^k(\vc v).$$ 

The set of life vectors $\mathcal{V}$ being infinite, the sums in \eqref{MSIL} and \eqref{CV_MSIL} need to be modified in practice. For a given pair of integers $K$ and $M$, we partition the set $\mathcal{V}$ to form a new \emph{finite} set $\tilde{\mathcal{V}}_{K,M}$ such that
$\sum_{\vc v\in\mathcal{V}}f(\vc v)=\sum_{\tilde{\vc v}\in\tilde{\mathcal{V}}_{K,M}}f(\tilde{\vc v})=1.$ The vectors $\tilde{\vc v}\in\tilde{\mathcal{V}}_{K,M}$ are of length $M$ and have their entries in the finite set $\{-1,0,1,2,\ldots,K,K+1\}$, so that
$$|\tilde{\mathcal{V}}_{K,M}|=(K+2)\dfrac{(K+2)^{(M+1)}-1}{(K+1)}.$$
They define equivalence classes in $\mathcal{V}$ as follows:
\begin{itemize}
\item if $-1\leq\tilde{v}_i\leq K$ for all $1\leq i \leq M$, then $$\tilde{\vc v}:=\big\{\vc v\in \mathcal{V}: v_i=\tilde{v}_i,\;\textrm{for all}\;  i \in\{1,\ldots, M\}\big\},$$ in which case $f(\tilde{\vc v})=p(\tilde{\vc v}|\vc\theta);$
\item if 
$\tilde{v}_{i_1}=\ldots=\tilde{v}_{i_\ell}= K+1$ for some $1\leq i_1,\ldots i_\ell\leq M$, $\ell\geq 1$, then \begin{eqnarray*}\tilde{\vc v}&:=&\big\{\vc v\in \mathcal{V}:v_i=\tilde{v}_i  \;\textrm{for all}\; i\in\{1,\ldots, M\}\setminus\{i_1,\ldots i_\ell\},\\&& \qquad\textrm{and}\; v_{i}\geq K+1 \;\textrm{for all}\; i\in\{i_1,\ldots i_\ell\} \big\},\end{eqnarray*}in which case $f(\tilde{\vc v})$ is computed as given in the next Lemma.
\end{itemize} 
\begin{lem} For any $\tilde{\vc v}\in\tilde{\mathcal{V}}_{K,M}$ such that
$\tilde{v}_{i_1},\ldots,\tilde{v}_{i_\ell}= K+1$ for some indices $1\leq i_1,\ldots i_\ell\leq M$, $\ell\geq 1$, we have
\begin{equation}\label{sum}f(\tilde{\vc v})=\sum_{k_1,\ldots,k_\ell\in\{-2,0,1,\ldots,K\}}(-1)^{\ell+\sum_{i=1}^\ell \mathds{1}\{k_i=-2\}} \;p({\vc v}^*(k_1,\ldots,k_\ell)|\vc\theta),\end{equation}where the vector ${\vc v}^*(k_1,\ldots,k_\ell)$ is such that, for $1\leq i\leq M$,
$${ v}_i^*(k_1,\ldots,k_\ell)=\left\{\begin{array}{ll} \tilde{v}_i & \textrm{if } i\notin \{i_1,\ldots i_\ell\}\\ k_j & \textrm{if } i=i_j \textrm{ for some $1\leq j\leq \ell$.}\end{array}\right.$$
\end{lem}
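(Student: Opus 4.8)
The plan is to prove the identity by an inclusion-exclusion argument over the ``unknown'' coordinates $i_1,\ldots,i_\ell$, reducing the case $\tilde v_{i_j}=K+1$ (which encodes ``at least $K+1$ offspring in age-class $i_j$'') to a signed sum over cases where each of these coordinates takes a \emph{specific} value in a finite set.

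First I would recall the interpretation: by definition of the equivalence class, $f(\tilde{\vc v})$ is the probability that the individual life vector $\vc v$ satisfies $v_i=\tilde v_i$ for $i\notin\{i_1,\ldots,i_\ell\}$ and $v_i\geq K+1$ for $i\in\{i_1,\ldots,i_\ell\}$. I would first handle the single-index case $\ell=1$, say the unknown index is $i_1$. The event $\{v_{i_1}\geq K+1\}$ (jointly with the fixed values at the other coordinates and with the individual still alive at the start of age-class $i_1$) can be written as the event that the individual is alive at age-class $i_1$, minus the disjoint union over $k\in\{0,1,\ldots,K\}$ of the events $\{v_{i_1}=k\}$ with those fixed values elsewhere. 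The ``alive at the start of age-class $i_1$'' event is precisely the $-2$-entry event described earlier in the paper (an individual alive at the beginning of the age-class with no information on her progeny there), so it is captured by $p(\vc v^*(-2)\,|\,\vc\theta)$ with the convention that the $-2$ entry contributes a factor $P$ (rather than $P(k)$ or $\vc p(k)$). Hence
\begin{equation*}
f(\tilde{\vc v})=p(\vc v^*(-2)\,|\,\vc\theta)-\sum_{k=0}^{K}p(\vc v^*(k)\,|\,\vc\theta)=\sum_{k\in\{-2,0,1,\ldots,K\}}(-1)^{1+\mathds{1}\{k=-2\}}p(\vc v^*(k)\,|\,\vc\theta),
\end{equation*}
which is exactly \eqref{sum} for $\ell=1$. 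The key point making this work is that the matrix-product form of $p(\cdot\,|\,\vc\theta)$ in Proposition~\ref{bigprop} is \emph{multilinear} in the per-age-class factors: each coordinate $i$ contributes a factor ($P(k)$, $\vc p(k)$, $P$, or $\vc p$) sandwiched between the factors of the neighbouring coordinates, so an identity among the factors at one coordinate lifts directly to the corresponding identity among the products, with all other factors untouched.

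The general case $\ell\geq 2$ then follows by applying the $\ell=1$ identity iteratively, one unknown coordinate at a time, again using multilinearity so that resolving coordinate $i_j$ does not disturb the (already partially resolved) factors at the other coordinates. Expanding the product of $\ell$ such two-term-type sums (each over $\{-2,0,1,\ldots,K\}$ with sign $(-1)^{1+\mathds{1}\{k=-2\}}$) and collecting the signs multiplicatively gives the overall sign $(-1)^{\ell+\sum_{i=1}^\ell\mathds{1}\{k_i=-2\}}$ and the vector $\vc v^*(k_1,\ldots,k_\ell)$ with $k_j$ inserted at position $i_j$, which is the claimed formula. Equivalently, one can argue directly by inclusion-exclusion on the complementary events $\{v_{i_j}\leq K\}=\bigsqcup_{k=0}^K\{v_{i_j}=k\}$, intersecting over $j=1,\ldots,\ell$; the $-2$ terms are the ``no constraint at $i_j$'' terms, and the disjointness in $k$ within each coordinate keeps the expansion clean.

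I expect the main obstacle to be purely bookkeeping rather than conceptual: being careful that the ``$-2$'' convention genuinely corresponds to the factor $P$ in the matrix product (i.e.\ that $p(\vc v^*(-2)\,|\,\vc\theta)$ with a $-2$ at position $i_1$ really is the probability of the event ``alive at the start of age-class $i_1$, fixed values elsewhere''), and that the claimed finite partition indeed gives $\sum_{\tilde{\vc v}\in\tilde{\mathcal V}_{K,M}}f(\tilde{\vc v})=1$ — though the latter is essentially automatic since the equivalence classes are disjoint and exhaust $\mathcal V$. Handling the boundary behaviour (coordinates beyond the individual's death, where the running vector may be shorter than $M$) needs the same $-1$/truncation conventions already used in the worked example $\vc v^{(3)},\vc v^{(4)}$, and requires no new idea. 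I would therefore structure the written proof as: (i) state the event-level meaning of $f(\tilde{\vc v})$ and of each $p(\vc v^*(\cdot)\,|\,\vc\theta)$; (ii) prove the $\ell=1$ identity by complementation and disjointness; (iii) invoke multilinearity of the matrix product to iterate, and expand to read off the sign.

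\cqfd
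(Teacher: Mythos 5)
Your proposal is correct and follows essentially the same route as the paper: both rewrite the constraint $v_{i_j}\geq K+1$ at each unknown coordinate as ``all values'' minus ``values in $\{0,\ldots,K\}$'', identify the ``all values'' term with the $-2$ convention via $P=\sum_{k\geq 0}P(k)$ (your multilinearity observation is exactly the paper's identity $\sum_{k_j\geq 0} p({\vc v}^*(\ldots, k_j,\ldots)|\vc\theta)= p({\vc v}^*(\ldots, -2,\ldots)|\vc\theta)$), and expand the product of the $\ell$ signed sums to read off the sign $(-1)^{\ell+\sum_{i}\mathds{1}\{k_i=-2\}}$. The only cosmetic difference is that you phrase the argument at the level of events and iterate over one coordinate at a time, whereas the paper manipulates the $\ell$ embedded sums simultaneously.
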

\dem We know $f(\tilde{\vc v})=\sum_{k_1,\ldots,k_\ell\geq K+1} p({\vc v}^*(k_1,\ldots,k_\ell)|\vc\theta)$ by the definition of  $\tilde{\vc v}$ and $\vc { v}^*(k_1,\ldots,k_\ell)$.
This sum contains $\ell$ embedded sums of the form $\sum_{k_j\geq K+1}$, which we rewrite as $\sum_{k_j\geq K+1}= \sum_{k_j\geq 0}-\sum_{0\leq k_j\leq K}$. Using
$$\sum_{k_j\geq 0} p({\vc v}^*(k_1,\ldots, k_j,\ldots,k_\ell)|\vc\theta)= p({\vc v}^*(k_1,\ldots, -2,\ldots,k_\ell)|\vc\theta),$$ and rearranging the terms then lead to \eqref{sum}. \cqfd

\medskip

Replacing $\mathcal{V}$ by $\tilde{\mathcal{V}}_{K,M}$ results in a different version of the MSIL criterion which selects the best model capturing differences in the number of children less than or equal to $K$ over the first $M$ age-classes. It is clear that the larger the age-class length $\ell$, the smaller $M$ and the larger $K$ should be chosen in order for $\mathcal{V}$ to be well approximated by $\tilde{\mathcal{V}}_{K,M}$.
When $\ell=1$, a possible choice of the partitioning parameters is taking $M$ as the ceiling of the expected lifetime plus one, and $K+1$ as the maximal expected number of children per age-class, that is, \begin{equation}\label{MK1}M=\lceil \sum_{x\geq 1} \hat{S}_x\rceil +1,\quad\textrm{and}\quad K+1=\lceil \max_{1\leq x\leq M} \hat{b}_x\rceil.\end{equation} Another choice leading to smaller equivalence classes in $\mathcal{V}$ is 
\begin{equation}\label{MK2}M= \min\{x\geq 0: \hat{S}_x<p\} +1,\quad\textrm{and}\quad K+1=\lceil \max_{1\leq x\leq M} (\hat{b}_x + \hat{\sigma}_x)\rceil,\end{equation}where $1-p$ is a covering probability, and $\hat{\sigma}_x$ is the standard error of the age-specific fertility rate at age $x$.
Formulae for $\ell-$year age-classes are analogous.



\subsection{Confidence intervals for the model outputs} For any performance measure of the model $g(x,\vc\theta)$, such as the mortality or fertility functions at age $x$, empirical and theoretical pointwise confidence intervals can be constructed. 

%
 If the true model is known, the pointwise mean and standard deviation of $g(x,\hat{\vc\theta})$ can be estimated through resampling. This provides a confidence interval for each value of $x$, and the width of the resulting confidence band gives us an indication of the stability of the estimated model, given the true model.
If the true model is unknown, bootstrapping from the data sample substitutes resampling from the true model.

Asymptotic theoretical confidence intervals are found using the delta method,
$$g(x,\hat{\vc\theta})\sim \mathcal{N}(g(x,{\vc\theta}),\nabla g(x,\vc\theta)\, J(\hat{\vc\theta})^{-1}\,\nabla  g(x,\vc\theta)^\top),\quad \textrm{as $N\rightarrow\infty$},$$where
$$J(\hat{\vc\theta})=\left.-\dfrac{\partial^2 \mathcal{L}({\vc\theta})}{\partial\vc\theta\partial\vc\theta^\top}\right|_{{\vc\theta}=\hat{\vc\theta}}$$ is the observed information matrix. A $95\%$ pointwise confidence interval for $g(x,{\vc\theta})$ is then given by
\begin{equation}\label{tci}g(x,\hat{\vc\theta})\pm1.96 \sqrt{\nabla g(x,\hat{\vc\theta})\, J(\hat{\vc\theta})^{-1}\,\nabla  g(x,\hat{\vc\theta})^\top}.\end{equation}


\section{Numerical applications}\label{num}

In this section, the two parameter estimation methods are applied to three types of illustrative examples. We first analyse artificial examples in which we simulate ATMMPPs, then we estimate their parameters based on the simulations, we make a goodness of fit analysis and compare the different criteria for choosing the optimal number of phases. We provide a summary of the results here and refer to Section \ref{details} in the Supplementary Material for details. Next, we use real global female population data in different countries to estimate the model parameters. Finally, we fit an MBT using real individual demographic data on the black robin population, and we give a biological interpretation to our results.

We used the Matlab function \texttt{fmincon} to minimize the sum of weighted squared errors \eqref{F} in the global population data case, or to maximize the log-likelihood function \eqref{loglik} in the individual demographic data case, under the constraint of positive parameters. The algorithm requires an initial value (seed) for the parameters. A reasonable guess for the model parameters is 
$$\gamma_i=\dfrac{n}{M+1}\qquad\mbox{for $1\leq i\leq n-1$,} $$ and
$$\lambda_i=\dfrac{\sum_{x=0}^{M} \hat{b}_x}{M+1},\quad \mu_i=\dfrac{\sum_{x=0}^{M}\hat{d}_x}{M+1},\qquad\mbox{for $1\leq i\leq n$.} $$ In order to minimize the risk to converge to local extrema, we started the algorithm with 25 different seeds obtained by adding random noise to the above values, and we chose the optimal solution among all.

\subsection{Artificial examples}

We considered three examples of ATMMPPs with $n=3$ or $n=4$ phases and we applied the two parameter estimation methods on data samples constructed by simulating trajectories of these models. 
As expected, the fits corresponding to the individual demographic data are much closer to the real model, and are associated to smaller confidence bands, than those corresponding to the global population data.

We observe that the MSE does not seem to be a satisfactory criterion to determine the optimal number of phases as the real value of $n$ never minimizes the MSE on these examples.
In all cases, the AIC provides the correct answer most of the time, while the CV and MSIL show similar trends and slightly under-estimate the true value of $n$. The parameters $K$ and $M$ in the MSIL were chosen according to \eqref{MK1} and the criterion turned out not to be sensitive to this choice as the optimal value of $n$ is the same for neighbouring values of $K$ and $M$. All the details and figures can be found in Section \ref{details} in the Supplementary Material.

\subsection{Female families in different countries} In \cite{hautphenne2012markovian} the authors used real global population data on female human mortality and fertility rates corresponding to five-year ages-classes from different countries to fit MBTs with $22$ phases; in these models, each phase corresponds exactly to one age-class. Here we use the weighted non-linear regression method described in Section \ref{ext} to estimate the parameters of new MBTs with $22$ phases, and we  compare the model age-specific mortality and fertility curves resulting from both approaches.
Besides facilitating the comparison, considering the same number of phases allows us to start the optimisation algorithm with the most realistic initial parameter values given by the model values in \cite{hautphenne2012markovian}. We show the results for a supercritical country (Congo), an almost-critical country (USA) and a subcritical country (Japan) in Figure \ref{demog1}. We see that the new fits have substantially improved: the MSE is divided by a factor of 5.21 for Congo, 1.89 for the USA and 1.38 for Japan. We observe that the fits of the mortality curves are less satisfactory for the older ages; note that removing the weights (which are decreasing with age) do not improve the fits.

\begin{figure}[h!]
\begin{center}
\raisebox{4cm}{(a)}\includegraphics[angle=0,
width=11cm, height=5cm]{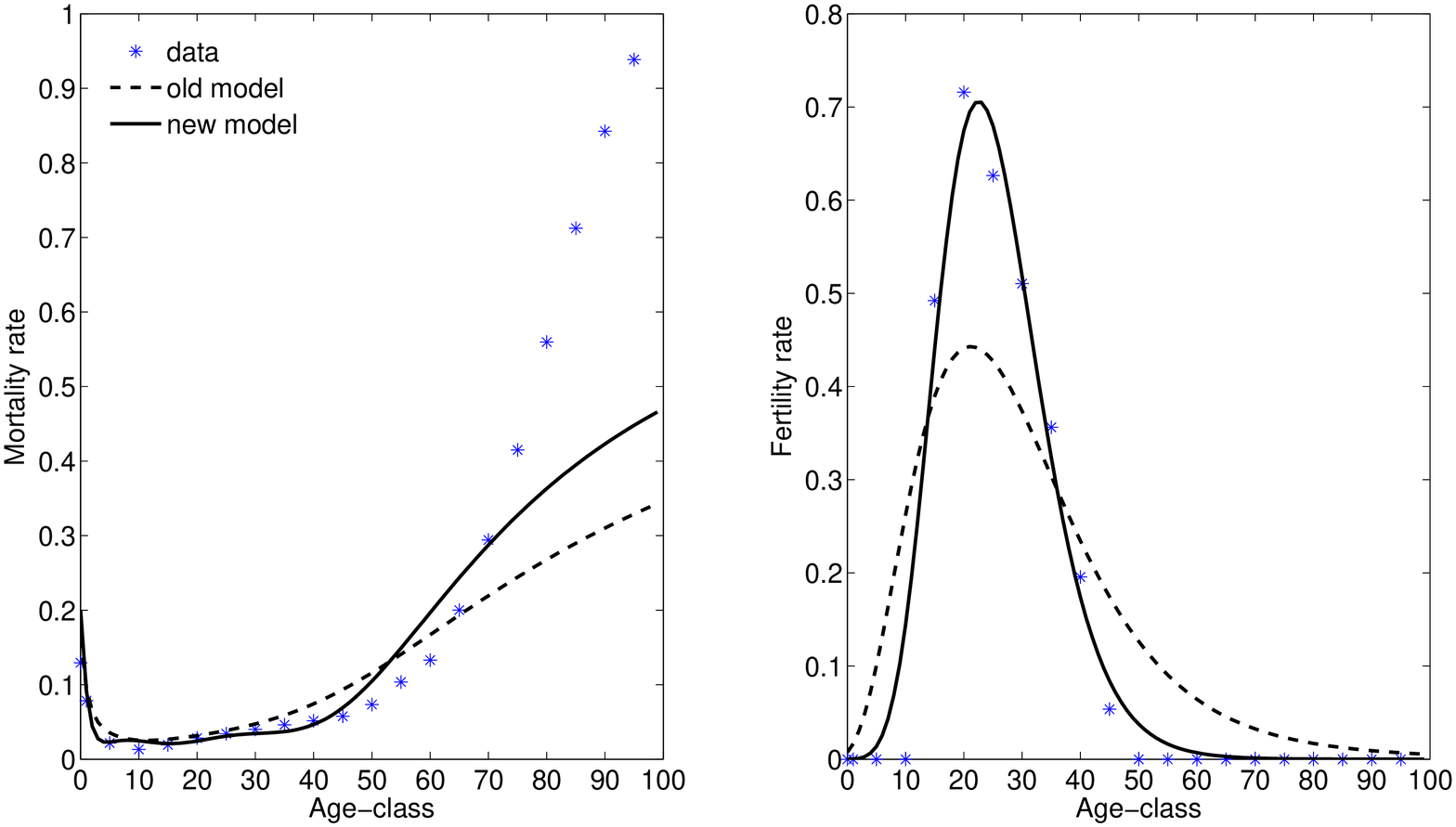}\vspace{-0.2cm}
\raisebox{4cm}{(b)}\includegraphics[angle=0,
width=11cm, height=5cm]{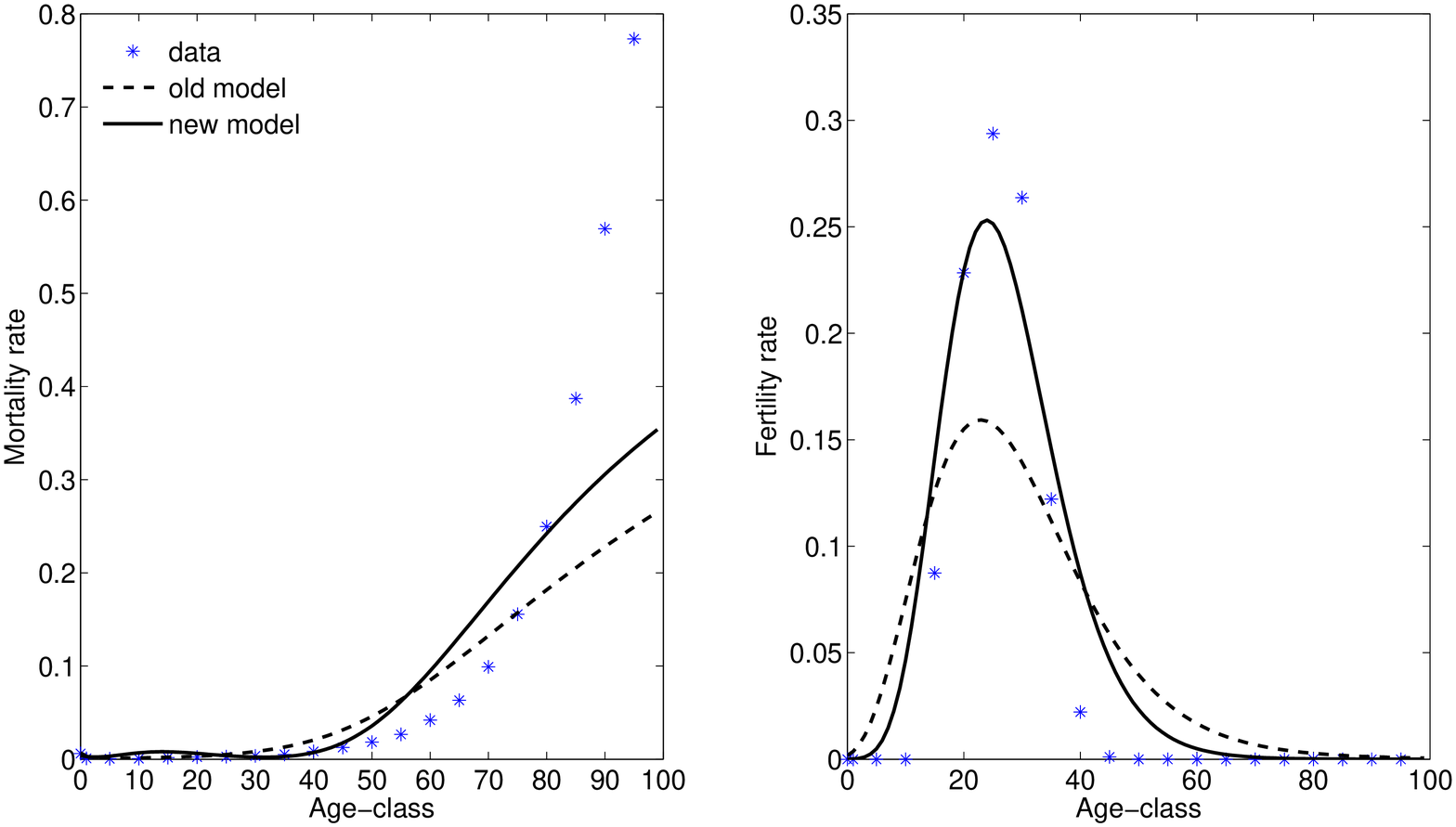}\vspace{-0.2cm}
\raisebox{4cm}{(c)}\includegraphics[angle=0,
width=11cm, height=5cm]{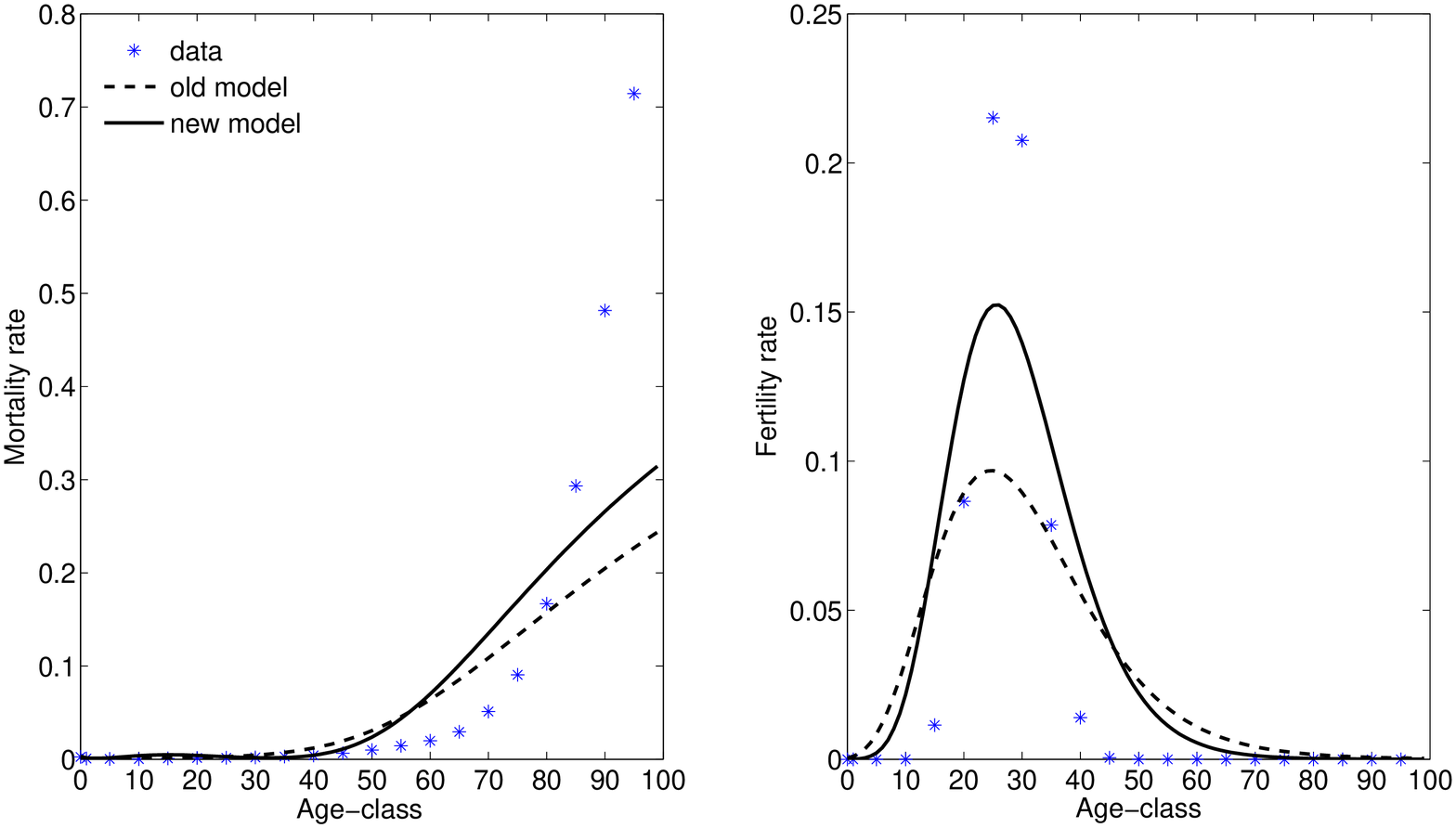}\vspace{-0.2cm}
\end{center}\caption{\label{demog1}\textbf{Female families.} Optimal model with $n=22$ computed using global population data (new model) corresponding to Congo (a), USA (b) and Japan (c), compared to the model used in \cite{hautphenne2012markovian} (old model).}
\end{figure}

\subsection{Black Robin population}

The black robin is an endangered songbird species endemic to the Chatham Islands, an isolated archipelago located 800 kilometres East of New Zealand. By 1980, the population of black robins had declined to five birds, including only a single successful breeding pair, on Mangere Island ~\cite{butlerblack}. Through intensive conservation efforts in 1980-1989 by the New Zealand Wildlife Service (now the Department of Conservation), the population recovered to 93 birds by spring 1990 \cite{kennedy2014severe}. Over the next decade (1990-1998), the population was closely monitored, but without human intervention. Nevertheless the population continued to grow rapidly to 197 adults by 1998, but after this period, the population growth slowed considerably and it only reached 239 adults in 2011 and 298 in 2014 \cite{massaro2013nest}.
 
For the conservation management of highly threatened species it is important to know the potential future viability, or survival probability, of a population, because if a population is not viable (i.e. fertility and survival rates are low), it will eventually become extinct. Population viability depends on reproductive rates and survival of individuals, but these rates may vary between sexes and across an individual's life span (with age). Hence, the exact male-to-female ratio and the ages of each individual within a population will influence a population's future viability. Reintroduction of a species into previously occupied parts of its former natural range is nowadays a common hands-on conservation method. In order to maximize the survival chances of the new population, it is necessary to know the optimal age distribution of the reintroduced population, which 
can only be designed based on a complete age-specific demographic analysis of the species. The black robin is an ideal species for which to develop these new statistical tools, because biologists have been collecting complete raw datasets on this bird species for several decades.
 An age-specific reorganisation of these data leads to a total of 
433 life vectors for the monitoring period 2007-2014.

\begin{figure}[h!]
\begin{center}
\includegraphics[angle=0,
width=11cm, height=5cm]{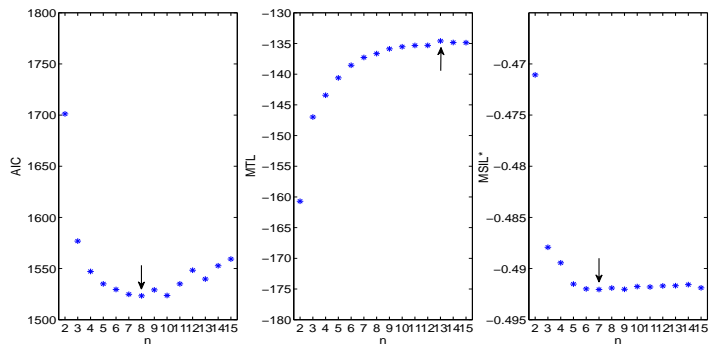}
\end{center}
\caption{\label{BR_AIC}\textbf{Black robins.} Left: Result of AIC: the optimal number of phases is $n=8$; Middle: Result of CV: the optimal number of phases is $n=13$; Right: Result of MSIL with $M=3$ and $K=1$: the optimal number of phases is $n=7$.}
\end{figure}

\begin{table}\begin{center}
\begin{tabular}{c|cccc} $M\smallsetminus K$ & 0 & 1 & 2 & 3\\\hline 2 & 5 & 9 & 13 & 13\\3 & 5 & \textbf{7} & 14 & 14\\ 4 & 5 & 15 & 14 & 14\end{tabular}\end{center}
\caption{\label{MSIL_BR}\textbf{Black robins.} The optimal number of phases according to the MSIL criterion for different values of $M$ and $K$.}
\end{table}

We performed different tests to determine the optimal number of phases to fit the black robin data. The results are shown in Figure \ref{BR_AIC}, where we see that the optimal value is $n=8$ according to the AIC, $n=13$ according to the CV criterion, and $n=7$ according to the MSIL criterion with $M=3$ and $K=1$ (determined using \eqref{MK1}). In this case, the MSIL criterion is quite sensitive to the choice of $M$ and $K$, as indicated in Table \ref{MSIL_BR}. 

\begin{figure}[h!]
\begin{center}
\includegraphics[angle=0,
width=13cm]{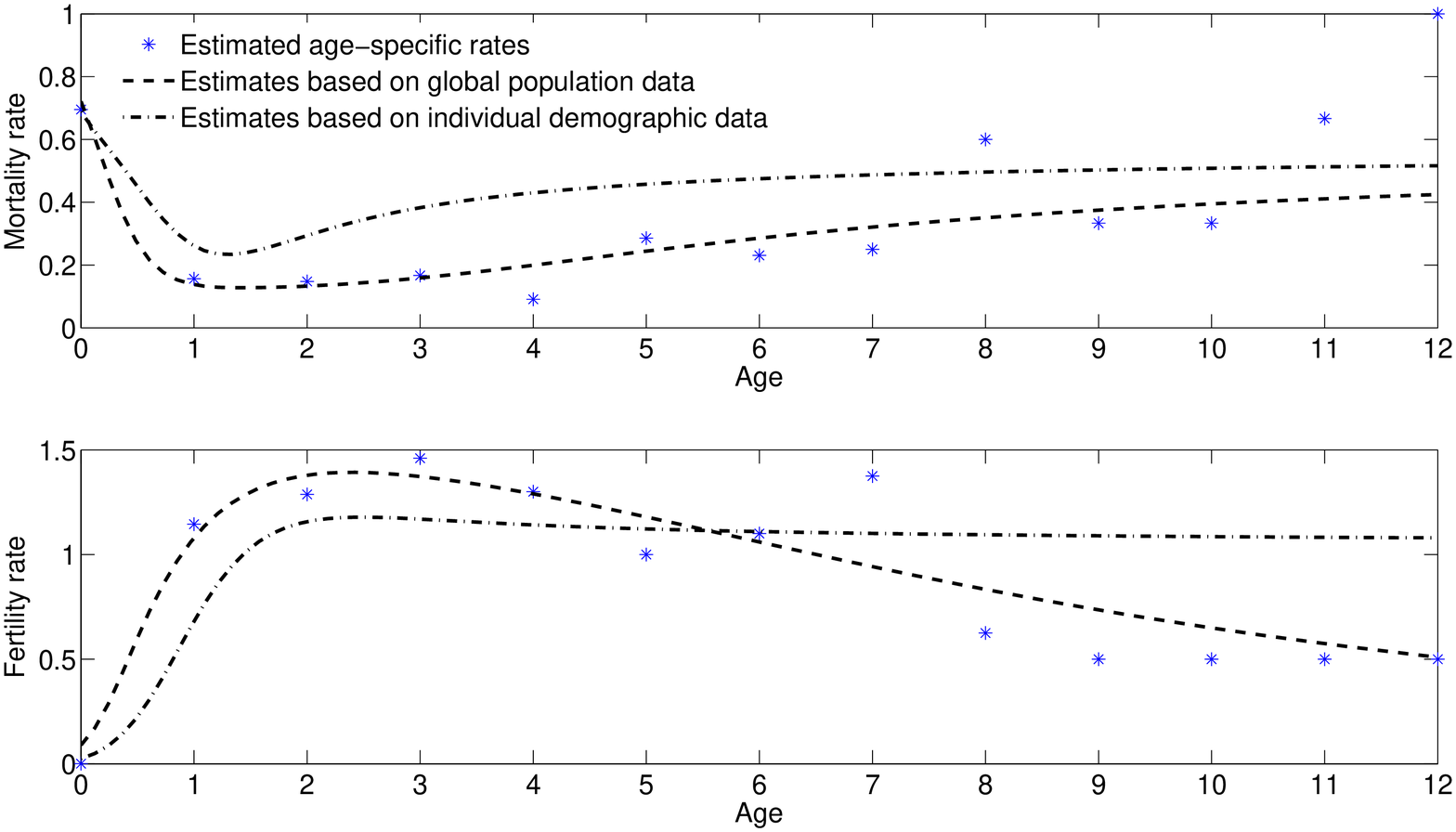}
\caption{\label{BR_p3}\textbf{Black robins.} Age-specific mortality and fertility curves for the models based on global population data and on individual demographic data with $n=8$.
}
\end{center}
\end{figure}

The model fits based on the global population data and on the individual demographic data (life vectors) with $n=8$ are compared in Figure \ref{BR_p3}. Since life vectors are available here, the corresponding models are the most informative.
Black robins reach maturity at 1 or 2-years of age. The age-specific mortality curves show that mortality of black robins is the highest  before they reach maturity and the lowest when they are 1 to 2 years old. Once birds reach 3 years of age, mortality rates do not increase dramatically with age, nor do fertility rates decline, which would support the hypothesis that there is no senescence.
However, as few birds reach the old ages, the accuracy of the estimates obtained using global population data declines with age. Figure \ref{BR_cip2} shows the 95$\%$ pointwise confidence intervals for the estimates of the model outputs, obtained by bootstrapping 25 samples from the original sample of life vectors.
 We see that the confidence intervals are quite narrow, especially for the estimates obtained using the individual demographic data.
 
 \begin{figure}[h!]
\includegraphics[angle=0,
width=13cm]{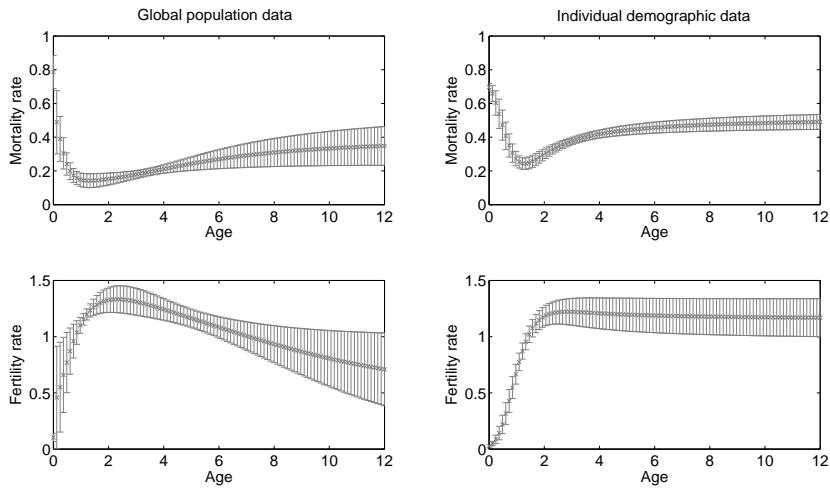}
\caption{\label{BR_cip2}\textbf{Black robins.} Mean and $95\%$ pointwise confidence intervals of the model fits corresponding to 25 {bootstrapped datasets} generated from the individual dataset containing $N=433$ life vectors, for $n=8$, using global population data (right) and individual demographic data (left).}
\end{figure}

\begin{figure}[h!]
\begin{center}
\includegraphics[angle=0,
width=11cm]{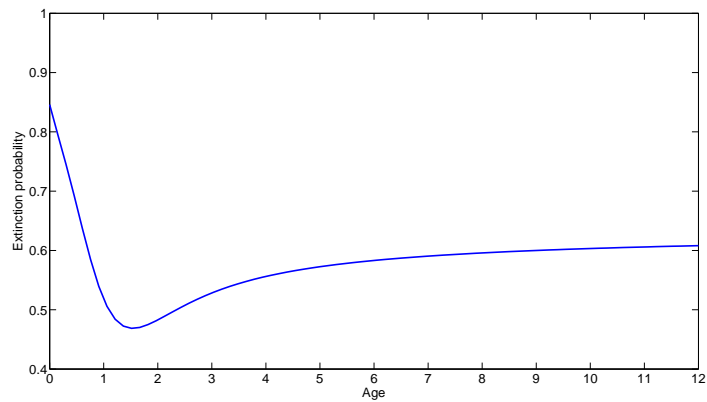}
\caption{\label{BR_comp2}\textbf{Black robins.}  The extinction probability of a female population as a function of the age of the initial female computed from the model with $n=8$ estimated using  individual demographic data.
}
\end{center}
\end{figure}

One of the most informative model output is the probability of extinction of a female family (that is, consisting only of female descendants) generated by a singe female, as a function of the age of that first female. This probability can be computed from the MBT model using any of the available algorithms (see for instance \cite{soph2,soph1}) and is shown in Figure \ref{BR_comp2}. The curve highlights the combined effect of the age-specific mortality and fertility rates on the viability of the female family, hence of the whole population, by extension. We see that, if we were to found a new population starting with a single female bird, in order to maximize the survival chance of the population, the optimal age of the initial female should be around one year old.

  
  \section{Future directions}
There are a number of directions for future research, particularly for the study of global population data. Cross-validation is delicate in that case as there are generally few data points. Leave-one-out cross-validation could be used, where we leave one age-class out. Possible complications would then include the choice of the weights $\hat{S}_x$ as these explicitly use the death rates for all age-classes.

Further methods for analysing global population data could be developed for when the age-specific mortality and fertility rates follow some specific distributions. For example, we may know that the birth rate and death rate at each age lie in an exponential family of distributions. The analysis could then involve a parallel process of estimating these distributions and using these distributions to simulate new samples of global population data.
Each new sample can then be used to do parameter estimation and construct confidence intervals.
 Alternatively, the theory developed for finding confidence intervals for weighted non-linear regression methods could be used in this case; unfortunately, this process is not computationally straightforward.

\section*{Acknowledgements}
\small{Sophie
Hautphenne thanks the Australian Research Council (ARC) for support through the
Discovery Early Career Researcher Award DE150101044.
The black robin research was funded by the New Zealand Foundation for Research, Science and Technology (UOCX0601) to Melanie Massaro, by the School of Biological Sciences, University of Canterbury, by the Brian Mason Scientific and Technical Trust, and by the Mohamed bin Zayed Species Conservation Fund. This research was only possible with permission from the Chatham Island Conservation Board and the logistic help of the Department of Conservation. 
}


\clearpage

\section{Supplementary Material}

\subsection{Proof of Proposition \ref{quant}} \label{proof_quant}
Let $L$ be the lifetime of an individual and $\bar{S}(x)=P[L>x]$ be the survival function. Since $L$ follows a PH$(\vc\alpha, D)$ distribution, $$\bar{S}(x)=1-P[L\leq x]=\vc\alpha e^{D x}\vc 1.$$ 
The probability of death at age $x$, $\bar{d}(x)$, can thus be calculated as 
\begin{eqnarray*} \bar{d}(x) &=&P[x< L \leq x+1 | L> x]= \dfrac{P[L>x]-P[L> x+1]}{P[L>x]}\\&=& \dfrac{\bar{S}(x)-\bar{S}(x+1)}{\bar{S}(x)}=\dfrac{\vc\alpha e^{D x}(I-e^D)\vc 1}{\vc\alpha e^{D x}\vc 1}.
\end{eqnarray*} It is shown in \cite{latouche2003transient} that the mean number of events until time $t$ in a TMAP started in phase $i$ at time 0 is given by $$E[N(t)|\varphi(0)=i]=[(I-e^{D t})(-D)^{-1}D_1\vc 1]_i.$$ 
Let $N([x,x+t))$ denote the number of events in the TMAP in the time interval $[x,x+t)$. By time-homogeneity of the TMAP, $$E[N([x,x+t))|\varphi(x)=i]=E[N(t)|\varphi(0)=i].$$
The mean number of offspring generated by an individual at age $x$ can thus be calculated as
\begin{eqnarray*} \bar{b}(x)& =& E[N([x,x+1)) | L>x]\\ &=&\sum_{1\leq i\leq n} P[\varphi(x)=i|L>x] E[N([x,x+1)) | L>x, \varphi(x)=i]\\&=&\sum_{1\leq i\leq n} \dfrac{P[\varphi(x)=i, L>x]}{P[L>x]}E[N(1)|\varphi(0)=i]\\&=& \sum_{1\leq i\leq n}\dfrac{[\vc\alpha e^{Dx}]_i}{\vc\alpha e^{D x}\vc 1}[(I-e^{D})(-D)^{-1}D_1\vc 1]_i\\
&=& \dfrac{\vc\alpha e^{Dx} (I-e^{D})(-D)^{-1}D_1\vc 1}{\vc\alpha e^{D x}\vc 1}.
\end{eqnarray*}\cqfd

\subsection{Global population data with $\ell-$year age classes}\label{ext}


In demography, the available data often consist of age-specific fertility and mortality rates over $\ell-$year age-classes with $\ell>1$, that is,
\begin{itemize}
\item the expected number of offspring \textit{per year} from a parent in age-class $[x,x+\ell)$, denoted as $\hat{\beta}_{[x,x+\ell)}$, and
\item the probability that an individual who reached the age-class $[x,x+\ell)$ dies \textit{within the year}, denoted as $\hat{\mu}_{[x,x+\ell)}$.
\end{itemize}


We extend the definition of $\bar{d}(x)$ and $\bar{b}(x)$ to $\ell-$year age-classes and define the functions $\bar{d}(x,\ell)$ and $\bar{b}(x,\ell)$, computed from the TMAP model, as follows:
\begin{eqnarray*}\bar{d}(x,\ell) &=&P[x< L \leq x+\ell | L> x]\\
\bar{b}(x,\ell)& =& E[N([x,x+\ell)) | L>x].\end{eqnarray*} It is a simple matter to generalise Proposition \ref{quant} to obtain
\begin{cor}\label{cor1} 
\begin{eqnarray*}
\bar{d}(x,\ell)&=& \dfrac{\vc\alpha e^{D x}(I-e^{D\ell})\vc 1}{\vc\alpha e^{D x}\vc 1}\\ \bar{b}(x,\ell)& =& \dfrac{\vc\alpha e^{Dx} (I-e^{D\ell})(-D)^{-1}D_1\vc 1}{\vc\alpha e^{D x}\vc 1}.
\end{eqnarray*}
\end{cor}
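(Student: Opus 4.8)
The plan is to mimic the proof of Proposition \ref{quant} verbatim, replacing the time step $1$ by $\ell$ wherever it appears. First I would recall that the lifetime $L$ of an individual in the TMAP follows a PH$(\vc\alpha, D)$ distribution with $D = D_0 + D_1$, so its survival function is $\bar S(x) = P[L>x] = \vc\alpha e^{Dx}\vc 1$, exactly as in Section \ref{proof_quant}. For the mortality term I would then write
$$\bar d(x,\ell) = P[x < L \le x+\ell \mid L > x] = \frac{\bar S(x) - \bar S(x+\ell)}{\bar S(x)} = \frac{\vc\alpha e^{Dx}(I - e^{D\ell})\vc 1}{\vc\alpha e^{Dx}\vc 1},$$
where the only new ingredient is $\bar S(x+\ell) = \vc\alpha e^{D(x+\ell)}\vc 1 = \vc\alpha e^{Dx} e^{D\ell}\vc 1$, using the semigroup property of the matrix exponential.

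For the fertility term I would invoke the formula from \cite{latouche2003transient}, already quoted in Section \ref{proof_quant}, that the mean number of events up to time $t$ in a TMAP started in phase $i$ is $E[N(t)\mid\varphi(0)=i] = [(I - e^{Dt})(-D)^{-1}D_1\vc 1]_i$. Applying this with $t=\ell$ and using time-homogeneity, $E[N([x,x+\ell)) \mid \varphi(x)=i] = E[N(\ell)\mid\varphi(0)=i]$. Then I would condition on the phase occupied at age $x$ given survival past $x$, exactly as in the displayed computation for $\bar b(x)$:
$$\bar b(x,\ell) = \sum_{i=1}^n \frac{[\vc\alpha e^{Dx}]_i}{\vc\alpha e^{Dx}\vc 1}\,[(I - e^{D\ell})(-D)^{-1}D_1\vc 1]_i = \frac{\vc\alpha e^{Dx}(I - e^{D\ell})(-D)^{-1}D_1\vc 1}{\vc\alpha e^{Dx}\vc 1}.$$
Here one should note that $P[\varphi(x)=i, L>x] = [\vc\alpha e^{Dx}]_i$, since the sub-generator $D$ governs the phase process while the individual is alive.

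There is essentially no obstacle here: the corollary is a one-parameter generalisation of Proposition \ref{quant}, recovered by setting $\ell=1$, and every step is a direct transcription. The only point requiring a word of care is the interpretation mismatch between the data ($\hat\beta_{[x,x+\ell)}$ and $\hat\mu_{[x,x+\ell)}$ are expressed \emph{per year} and \emph{within the year}) and the model quantities $\bar b(x,\ell)$, $\bar d(x,\ell)$ (which aggregate over the whole $\ell$-year class); but that rescaling is a modelling remark outside the scope of the corollary's statement, which only asserts the closed forms for $\bar d(x,\ell)$ and $\bar b(x,\ell)$ as defined. I would therefore present the proof in two short displays as above and close with \cqfd.
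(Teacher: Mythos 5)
Your proposal is correct and matches the paper's intent exactly: the paper gives no separate proof of Corollary~\ref{cor1}, stating only that it is ``a simple matter to generalise Proposition~\ref{quant}'', and your argument is precisely the proof of Proposition~\ref{quant} from Section~\ref{proof_quant} with the unit time step replaced by $\ell$. Nothing further is needed.
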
 

The correspondence between $\bar{d}(x,\ell)$ and $\bar{b}(x,\ell)$ and the mortality and fertility {rates} in age-class $[x,x+\ell)$ is given in the next Lemma.
\begin{lem} \label{lem1} 
\begin{eqnarray*}\bar{d}(x,\ell) &\equiv &1-(1-\hat{\mu}_{[x,x+\ell)})^\ell\\
\bar{b}(x,\ell)& \equiv&\hat{\beta}_{[x,x+\ell)} \,\dfrac{1-(1-\hat{\mu}_{[x,x+\ell)})^\ell}{\hat{\mu}_{[x,x+\ell)}},\end{eqnarray*}where the symbol $\equiv$ has to be interpreted as ``is the model equivalent of''.
\end{lem}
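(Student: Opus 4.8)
The plan is to reconcile two different but compatible ways of describing what happens to an individual over an $\ell$-year age-class: the continuous-time TMAP description (captured by Corollary \ref{cor1}) and the ``yearly rate'' description used by demographers to report the data ($\hat\mu_{[x,x+\ell)}$ and $\hat\beta_{[x,x+\ell)}$). The point is that the data give a \emph{per-year} probability of death and a \emph{per-year} expected number of offspring, assumed constant across the $\ell$ years of the age-class, whereas $\bar d(x,\ell)$ and $\bar b(x,\ell)$ are the \emph{aggregate} quantities over the whole age-class. So the lemma is really a bookkeeping identity relating an $\ell$-fold iteration of a one-year mechanism to its one-step analogue, and I would prove the two displays separately.

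For the mortality identity: within the age-class $[x,x+\ell)$ the individual is, by assumption, subject to a constant yearly death probability $\hat\mu_{[x,x+\ell)}$. The event $\{L\le x+\ell\}$ given $\{L>x\}$ is the complement of surviving all $\ell$ successive years, and since (conditionally on entering the age-class) the $\ell$ yearly survival events each have probability $1-\hat\mu_{[x,x+\ell)}$ and compound multiplicatively, one gets $\bar d(x,\ell)=1-(1-\hat\mu_{[x,x+\ell)})^\ell$ directly. This is the model-equivalent reading of the first display; no appeal to Corollary \ref{cor1} is even needed beyond knowing that $\bar d(x,\ell)=P[x<L\le x+\ell\mid L>x]$, which is its definition.

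For the fertility identity: $\bar b(x,\ell)=E[N([x,x+\ell))\mid L>x]$ counts the expected number of offspring over the whole age-class, for an individual known to be alive at age $x$ but whose death may occur partway through the class. I would condition on the year of death within the age-class. Given the individual is alive at the start of year $j$ of the class (probability $(1-\hat\mu_{[x,x+\ell)})^{j-1}$, for $j=1,\dots,\ell$), that year contributes $\hat\beta_{[x,x+\ell)}$ expected offspring while the individual is alive in it; summing the expected contribution of each year gives
\[
\bar b(x,\ell)=\hat\beta_{[x,x+\ell)}\sum_{j=1}^{\ell}(1-\hat\mu_{[x,x+\ell)})^{j-1}
=\hat\beta_{[x,x+\ell)}\,\frac{1-(1-\hat\mu_{[x,x+\ell)})^\ell}{\hat\mu_{[x,x+\ell)}},
\]
the last step being the finite geometric sum. (One should note the mild modelling convention that offspring production and mortality within a year are arranged so that a parent alive at the start of a year contributes the full yearly rate $\hat\beta$ for that year; this is the natural discretisation and matches how the data are reported.) Combining with the mortality identity then rewrites the factor $1-(1-\hat\mu_{[x,x+\ell)})^\ell$ as $\bar d(x,\ell)$ if desired, but the displayed form is already what is claimed.

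The main obstacle is not any computation — both identities reduce to a geometric series — but making the \emph{interpretation} precise: one has to state carefully what ``alive during year $j$ of the age-class'' means and how offspring accrue relative to death within a year, since $\bar b$ and $\bar d$ come from a continuous-time object while $\hat\beta$ and $\hat\mu$ are discrete yearly summaries. Once the convention ``a parent alive at the start of a year produces at yearly rate $\hat\beta$ during that year, and survives to the next year with probability $1-\hat\mu$'' is fixed, the symbol $\equiv$ (``is the model equivalent of'') is exactly the assertion that the TMAP aggregates $\bar d(x,\ell),\bar b(x,\ell)$ of Corollary \ref{cor1} coincide with these discrete expressions, and the algebra above closes the argument. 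I would also cross-check consistency with the $\ell=1$ case, where both displays collapse to $\bar d(x,1)\equiv\hat\mu_{[x,x+1)}$ and $\bar b(x,1)\equiv\hat\beta_{[x,x+1)}$, recovering the identification of $\hat d_x,\hat b_x$ used in Section \ref{average}.
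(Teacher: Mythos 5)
Your proposal is correct and follows essentially the same route as the paper: the mortality identity via compounding $\ell$ independent yearly survivals, and the fertility identity by decomposing $\bar b(x,\ell)$ into the expected yearly contributions $(1-\hat\mu_{[x,x+\ell)})^{j-1}\hat\beta_{[x,x+\ell)}$ and summing the geometric series. Your explicit remark about the convention for how offspring accrue relative to within-year death is a useful clarification that the paper leaves implicit in the meaning of $\equiv$.
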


\begin{proof}The model function $1-\bar{d}(x,\ell)$ is the probability that an individual who reached age $x$ survives at least until age $x+\ell$, that is, survives $\ell$ successive one-year age intervals, which occurs with probability $(1-\hat{\mu}_{[x,x+\ell)})^\ell$.

The model function $\bar{b}(x,\ell)$ can be rewritten as
\begin{eqnarray*}\bar{b}(x,\ell)&=&E[N([x,x+1)) | L>x]+E[N([x+1,x+2)) | L>x]+\ldots\\&&+E[N([x+\ell-1,x+\ell)) | L>x]\\&\equiv& \hat{\beta}_{[x,x+\ell)}  + (1- \hat{\mu}_{[x,x+\ell)}) \hat{\beta}_{[x,x+\ell)} +\ldots + (1-\hat{\mu}_{[x,x+\ell)} )^{\ell-1} \hat{\beta}_{[x,x+\ell)} \\&=&\hat{\beta}_{[x,x+\ell)} \,\dfrac{1-(1-\hat{\mu}_{[x,x+\ell)})^\ell}{\hat{\mu}_{[x,x+\ell)}},
\end{eqnarray*}which completes the proof.
\end{proof}

In order to estimate the model parameters in this case, the objective function \eqref{F} then needs to be modified according to Corollary \ref{cor1} and Lemma \ref{lem1}.

\subsection{Proof of Proposition \ref{bigprop}}\label{proof}

In order to compute $P(k)$, we actually compute $P(k,t)$ for $0\leq k\leq K$, and for any $t\geq 0$ where $$P_{ij}(k,t):=P[N(t)=k,\varphi(t)=j|N(0)=0,\varphi(0)=i],$$ and observe that $P(k)=P(k,\ell)$. It is well known from the theory of MAPs that the probability generating function $P^*(z,t):=\sum_{k\geq 0} P(k,t) z^k,$ is given by the matrix exponential
$$P^*(z,t)=\exp[D(z)t],\quad \textrm{where $D(z):=D_0+z\,D_1.$}$$Since $P(k,t)=(1/k!)[\partial^k P^*(z,t)/ (\partial z)^k]\big|_{z=0}$ for any $k\geq 0$, we need to take the derivatives of the matrix exponential $\exp[D(z)t]$ with respect to $z$. The scalar rule of exponential differentiation only holds here if $D_0$ and $D_1$ commute, which is generally not the case. Instead, we first differentiate $P^*(z,t)$ with respect to $t$,
\begin{equation} \partial P^*(z,t) /\partial t= D(z) P^*(z,t),\end{equation} and we then take successive derivatives of this equation with respect to $z$:
\begin{eqnarray*}
 \partial^2 P^*(z,t) /(\partial t)(\partial z)&=&D_1 P^*(z,t)+ D(z) \partial P^*(z,t) /\partial z  \\\partial^3 P^*(z,t) /(\partial t)(\partial z)^2&=&2 D_1 \partial P^*(z,t) /\partial z+ D(z) \partial^2 P^*(z,t) /(\partial z)^2 \\&\vdots &\\\partial^{(K+1)} P^*(z,t) /(\partial t)(\partial z)^K&=&K D_1 \partial^{(K-1)} P^*(z,t) /(\partial z)^{(K-1)}\\&&+D(z) \partial^K P^*(z,t) /(\partial z)^K .
\end{eqnarray*} This system of partial derivative equations can be rewritten as an ordinary differential equation for the unknown matrix containing the partial derivatives of $P^*(z,t)$ with respect to $z$,
$$\small{\dfrac{d}{dt}\left[\begin{array}{c}P^*(z,t)\\\partial P^*(z,t)/ \partial z\\\partial^2 P^*(z,t)/ (\partial z)^2\\ \vdots\\ \partial^K P^*(z,t)/ (\partial z)^K\end{array}\right] = \left[
\begingroup 
\setlength\arraycolsep{2pt}
\begin{array}{ccccc} D(z)& &&&\\ D_1&D(z)& &&\\ &2 D_1&D(z) & &  \\ & &\ddots & &\\ &&&K D_1 &D(z)\end{array}
\endgroup
\right] \left[\begin{array}{c}P^*(z,t)\\\partial P^*(z,t)/ \partial z)\\\partial^2 P^*(z,t)/ (\partial z)^2\\ \vdots\\ \partial^K P^*(z,t)/ (\partial z)^K\end{array}\right]},  $$ whose solution is
$$\small{\left[\begin{array}{c}P^*(z,t)\\\partial P^*(z,t)/ \partial z\\\partial^2 P^*(z,t)/ (\partial z)^2\\ \vdots\\ \partial^K P^*(z,t)/ (\partial z)^K\end{array}\right]=\exp\left(\left[
\begingroup 
\setlength\arraycolsep{2pt}
\begin{array}{ccccc} D(z) & &&&\\ D_1&D(z) & &&\\ &2 D_1&D(z) & &  \\ & &\ddots &\phantom{D(z)} &\\ &&&K D_1 &D(z)\end{array}
\endgroup
\right]\,t\right) \left[\begin{array}{c}I\\0\\0\\ \vdots\\ 0\end{array}\right]}.$$ Taking $z=0$ and denoting
$$\mathcal{M}=\left[\begin{array}{ccccc} D_0 & &&&\\ D_1&D_0 & &&\\ &2 D_1&D_0 & &  \\ & &\ddots & &\\ &&&K D_1 &D_0\end{array}\right], $$ we obtain $$P(k)=P(k,\ell)=(1/k!) (\vc e_k\otimes I) \exp(\mathcal{M}\ell)(\vc e_1^\top\otimes I).$$ Then, $\vc p(k)$ is obtained by conditioning on the time $u\in[0,\ell]$ when the individual dies,
\begin{eqnarray*}\vc p(k)&=&\int_0^\ell P(k,u) \,\vc d\, du\\&=&(1/k!) (\vc e_k\otimes I)\int_0^\ell \exp(\mathcal{M} u)\,du\,(\vc e_1^\top\otimes I) \,\vc d\\&=&(1/k!) (\vc e_k\otimes I)[I-\exp(\mathcal{M}\ell)](-\mathcal{M})^{-1}\,(\vc e_1^\top\otimes I) \,\vc d\end{eqnarray*} Next, $$P=P^*(1,\ell)=\exp(D\ell),$$ where $D=D_0+D_1$, and finally $$\vc p=\int_0^\ell  \exp(Du)\vc d\,du=[I-\exp(D\ell)](-D)^{-1}\vc d.$$ \cqfd

\subsection{Further details on the artificial examples}\label{details}

We consider three examples of ATMMPPs, simulating $N$ trajectories of these processes for $T$ units of time.
The different parameter values are summarized in Table \ref{ta1}.

%
\begin{table}[h!]\centering
\begin{tabular}{l|c|ccc|cccc|cccc|c|c}
&$n$&$\gamma_1$&$\gamma_2$&$\gamma_3$ & $\mu_1$ & $\mu_2$&$\mu_3$&$\mu_4$&$\lambda_1$&$\lambda_2$ & $\lambda_3$ &$\lambda_4$ &$N$ & $T$ \\\hline Ex. 1& 3&$0.25$&$0.25$&$-$& $0.2$&$0.4$& $0.9$&$-$& 6 & 3 & 2 &$-$&$500$ & $15$\\ Ex. 2& 4&$0.5$&$0.1$&$0.1$& $0.3$&$0.1$& $0.2$&$0.7$& $0.5$ & 2 & $0.5$ &$0.01$& $400$ & $25$\\Ex. 3&4&$0.3$&$0.3$&$0.3$& $0.6$&$0.1$& $0.2$&$0.5$& $0.2$ & 3 & 2 &$0.1$ & $500$ & $15$
\end{tabular} \caption{\label{ta1}Parameter values and simulation characteristics of the ATMMPPs corresponding to the three artificial examples. The number of phases is denoted by $n$, $\gamma_i$ is the transition rate from phase $i$ to phase $i+1$, $\mu_i$ is the death rate in phases $i$, $\lambda_i$ is the birth rate in phase $i$, $N$ is the number of simulated trajectories, and $T$ is the simulation time for each trajectory.}
\end{table}

For each example, we associated each simulated trajectory of the ATMMPP with a life vector by counting the number of births falling in successive $\ell$-year intervals. Here we took $\ell=1$, so each entry of the vectors corresponds to a specific age-classes and the vectors have a total of $T$ entries. This produced samples $\{\vc v^{(1)},\ldots,\vc v^{(N)}\}$ of $N$ individual life vectors of the form \eqref{vectv}. The average age-specific fertility and mortality rates $\hat{b}_x$ and $\hat{d}_x$, for $0\leq x\leq M=T-1$, were computed directly from these samples.


We first performed a goodness of fit analysis on Example 1. We set $n=3$ (the true number of phases), and we used $(i)$ the global population data $\hat{d}_x$ and $\hat{b}_x$, and $(ii)$ the full sample of $N$ life vectors, to estimate the model parameters using the corresponding statistical method, leading to two different parameter estimates $\hat{\vc\theta}^{(i)}$ and $\hat{\vc\theta}^{(ii)}$. In Figure \ref{f1bis} we compare the performance measures $g(x,\hat{\vc\theta}^{(i)})$ and $g(x,\hat{\vc\theta}^{(ii)})$ corresponding to the age-specific mortality and fertility curves, obtained with the two different estimation methods.
  To further assess the accuracy of the estimates, we re-sampled 50 datasets of size $N$ from the true model, and we show in Figure \ref{f3bis} the mean curves compared to the real ones, as well as the corresponding $95\%$ pointwise confidence intervals. In Figure \ref{f3bis2}, we perform the same analysis by bootstrapping 50 times from a single dataset instead of resampling. We conclude from Figures \ref{f1bis}, \ref{f3bis} and \ref{f3bis2}  that, as expected, the fits corresponding to the individual demographic data are much closer to the real model, and are associated to smaller confidence bands, than those corresponding to the global population data. In Figure \ref{ex1_CI} we show the theoretical $95\%$ pointwise confidence intervals given by \eqref{tci} for the fits obtained using individual demographic data; these are comparable to those shown in Figure \ref{f3bis}. Finally, in
Figure \ref{f1a} we compare the fits based on life vectors with different age-class lengths $\ell$ to those based on the observation of the successive inter-event times. We see that as $\ell$ decreases to zero, the estimates obtained with our method converge to those based on the successive inter-event times.


For all three examples, Figure \ref{MSE_all} shows the sensitivity of the MSE with respect to the number of phases in the fitted model. This also highlights the fact that the MSE does not seem to be a satisfactory criterion to determine the optimal number of phases as the real value of $n$ never minimizes the MSE on these examples.  Finally, Figure \ref{all_crit} compares all criteria to decide upon the optimal number of phases in the individual demographic data case. In all cases, the AIC provides the correct answer most of the time, while the CV and MSIL show similar trends and slightly under-estimate the true value of $n$. The parameters $K$ and $M$ in the MSIL were chosen according to \eqref{MK1} and the criterion turned out not to be sensitive to this choice as the optimal value of $n$ is the same for neighbouring values of $K$ and $M$.


\begin{figure}[h!]
\begin{center}
\includegraphics[angle=0,
width=13cm]{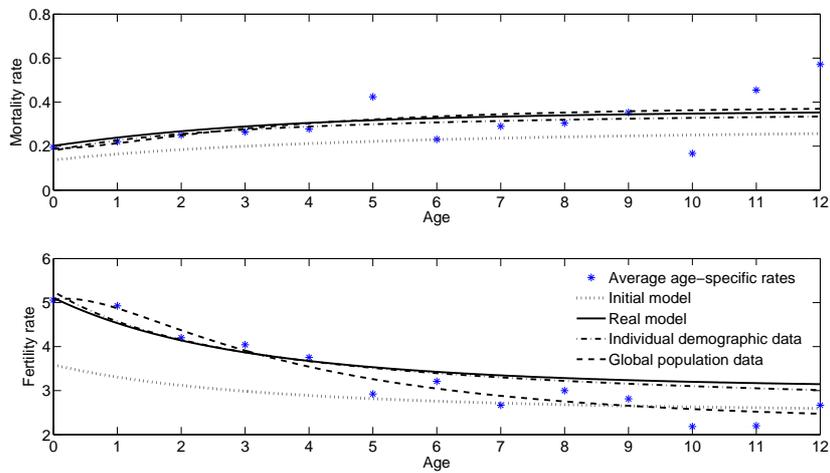}
\caption{\label{f1bis}\textbf{Example 1.} Comparison of the model fits obtained using global population data and individual demographic data. The initial model is the one used as a seed in the optimisation algorithms.}\end{center}
\end{figure}


\begin{figure}[h]
\begin{center}
\includegraphics[angle=0,
width=13cm]{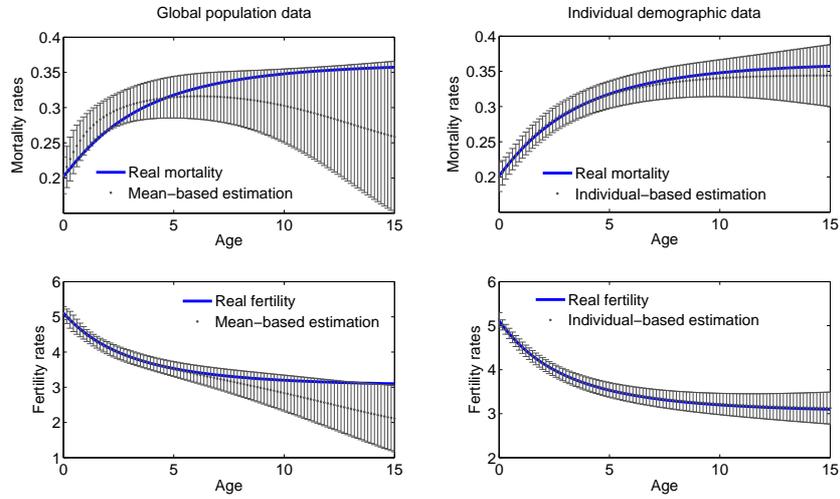}
\end{center}\caption{\label{f3bis}\textbf{Example 1.} Mean and $95\%$ pointwise confidence intervals of the model fits corresponding to 50 simulations from the real model using global population data (left) and individual demographic data (right).}
\end{figure}

\begin{figure}[h]
\begin{center}
\includegraphics[angle=0,
width=13cm]{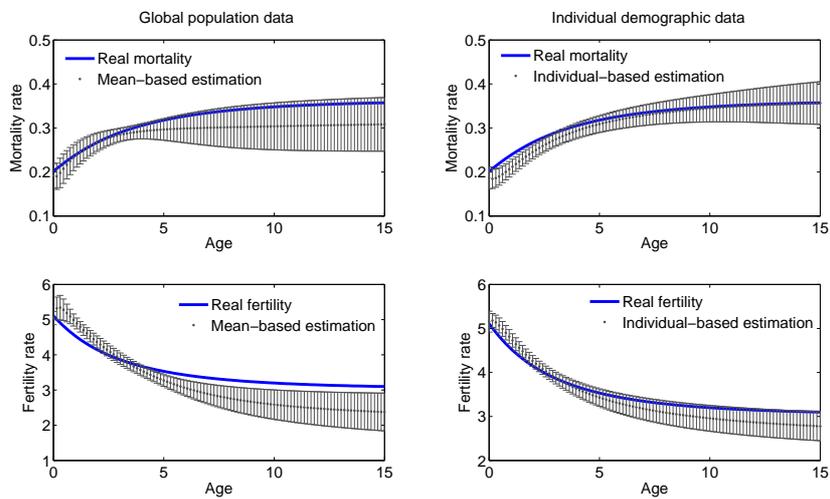}
\end{center}\caption{\label{f3bis2}\textbf{Example 1.} Mean and $95\%$ pointwise confidence intervals of the model fits corresponding to 50 {bootstrapped datasets} using global population data (left) and individual demographic data (right).}
\end{figure}

\begin{figure}[h!]
\begin{center}
\includegraphics[angle=0,
width=13cm]{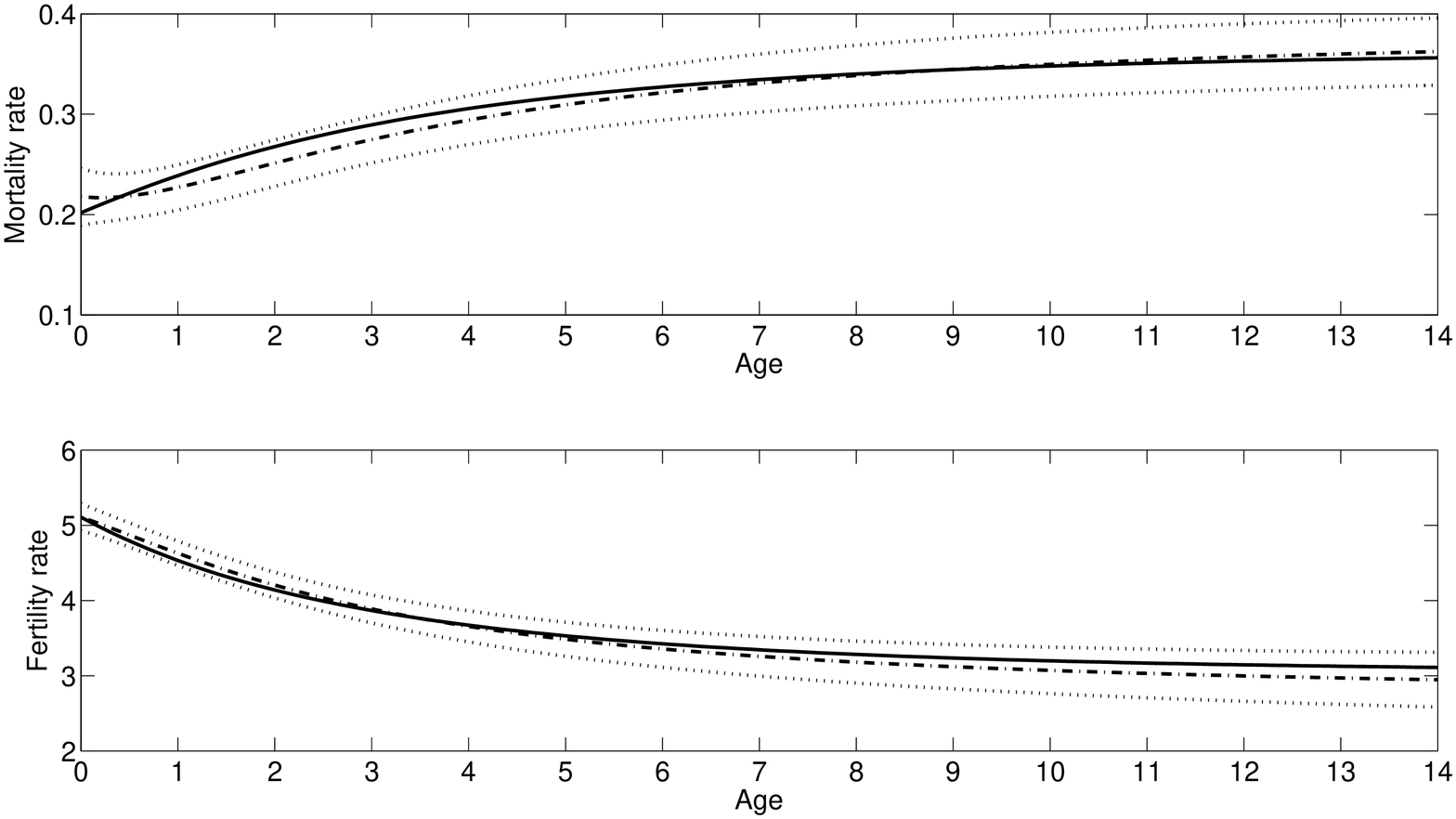}
\caption{\label{ex1_CI}\textbf{Example 1.}{ Theoretical} $95\%$ pointwise confidence intervals (dotted lines) in the individual demographic data case. The true curves correspond to the plain lines and the estimated curves correspond to the dash-dot lines.}
\end{center}
\end{figure}

\begin{figure}[h!]
\begin{center}
\includegraphics[angle=0,
width=13cm]{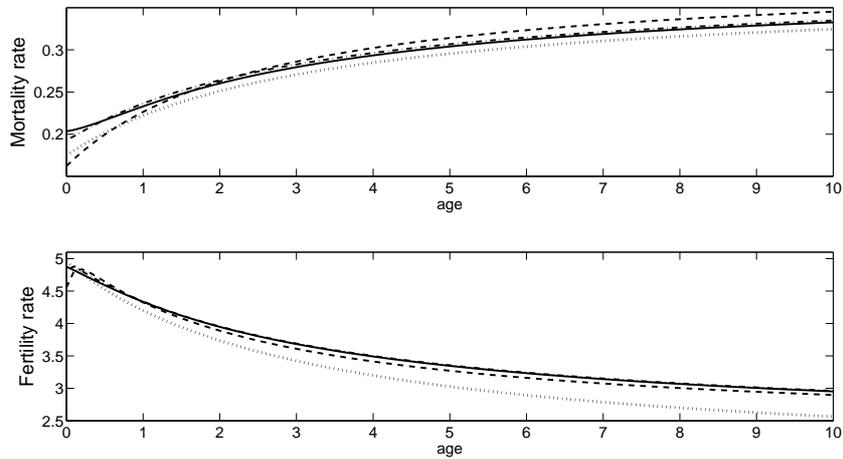}
\end{center}\caption{\label{f1a}\textbf{Example 1.} Comparison of the model fits obtained using the MLE based on inter-arrival times (plain lines), and individual demographic data with $\ell=5$ (dotted lines), $\ell=2.5$ (dashed lines), and $\ell=1$ (dash-dot lines).}
\end{figure}

\begin{figure}[h!]
\begin{center}
\includegraphics[angle=0,
width=12cm, height=5cm]{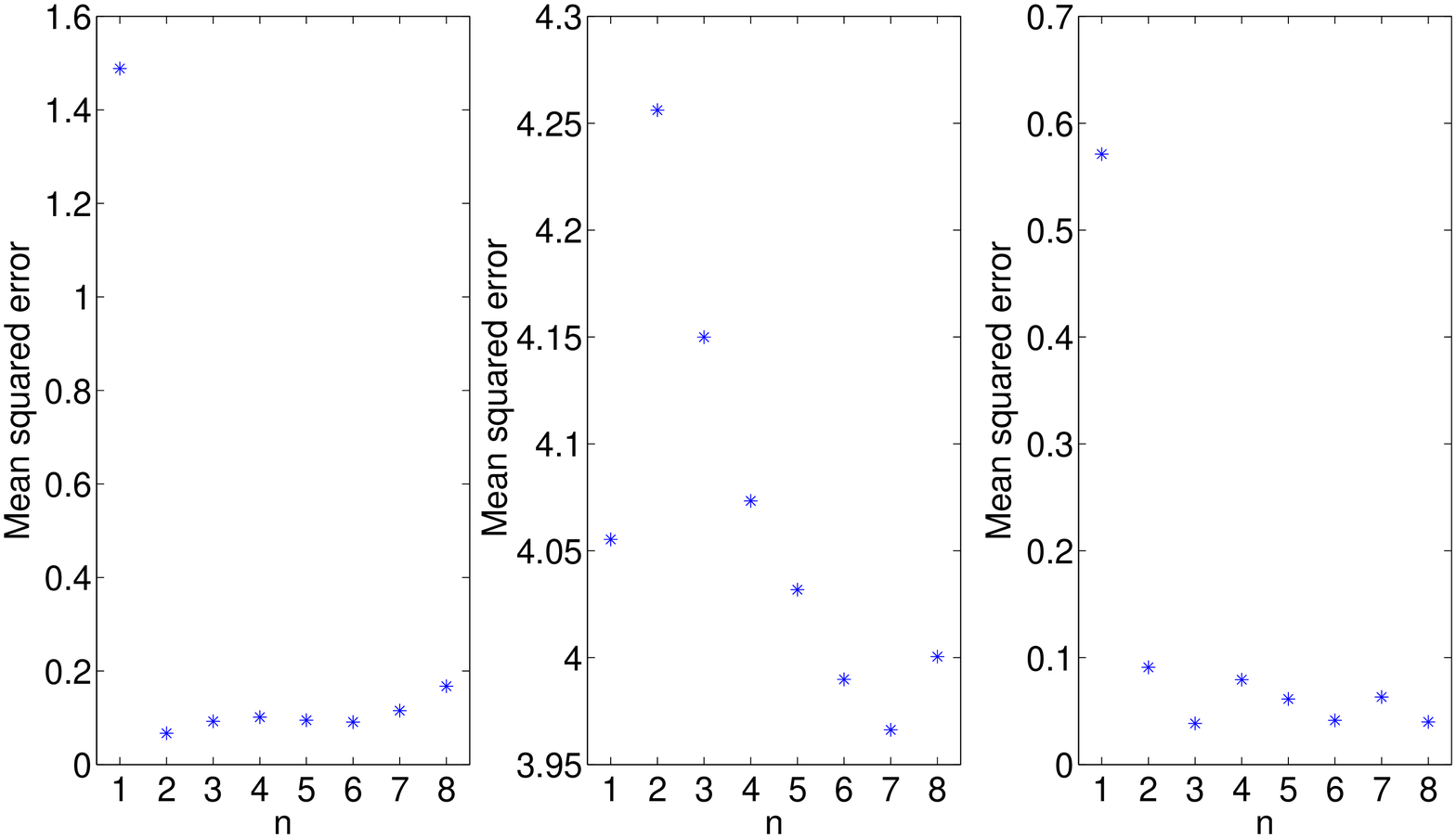}
\caption{\label{MSE_all}\textbf{Examples 1 (left), 2 (middle), and 3 (right).} Mean squared error based on 50 simulations from the true models.}
\end{center}
\end{figure}

\begin{figure}[h!]
\begin{center}
\raisebox{3.5cm}{(a)}\includegraphics[angle=0,
width=10cm, height=4.5cm]{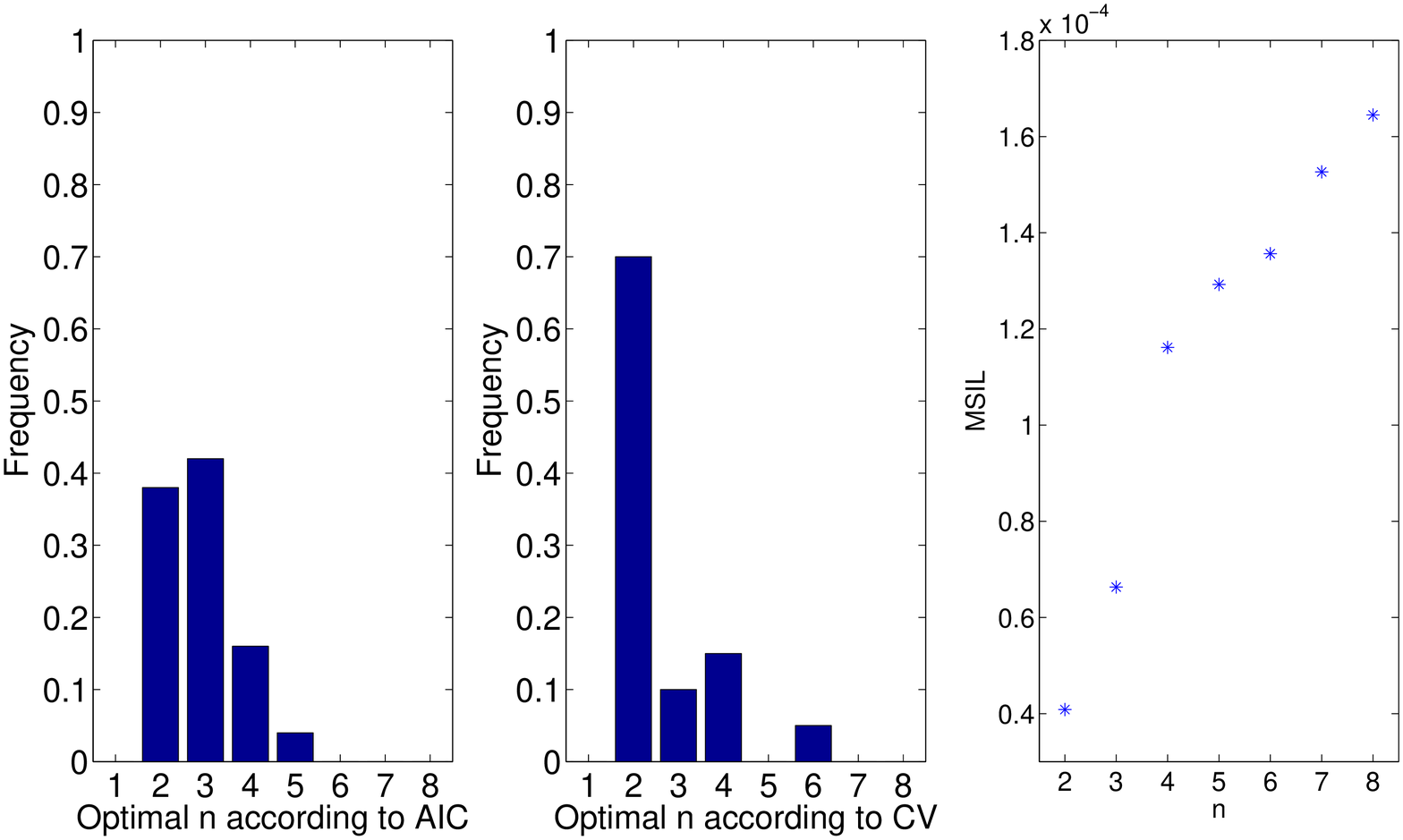}\vspace{-0.15cm}\\
\raisebox{3.5cm}{(b)}
\includegraphics[angle=0,
width=10cm, height=4.5cm]{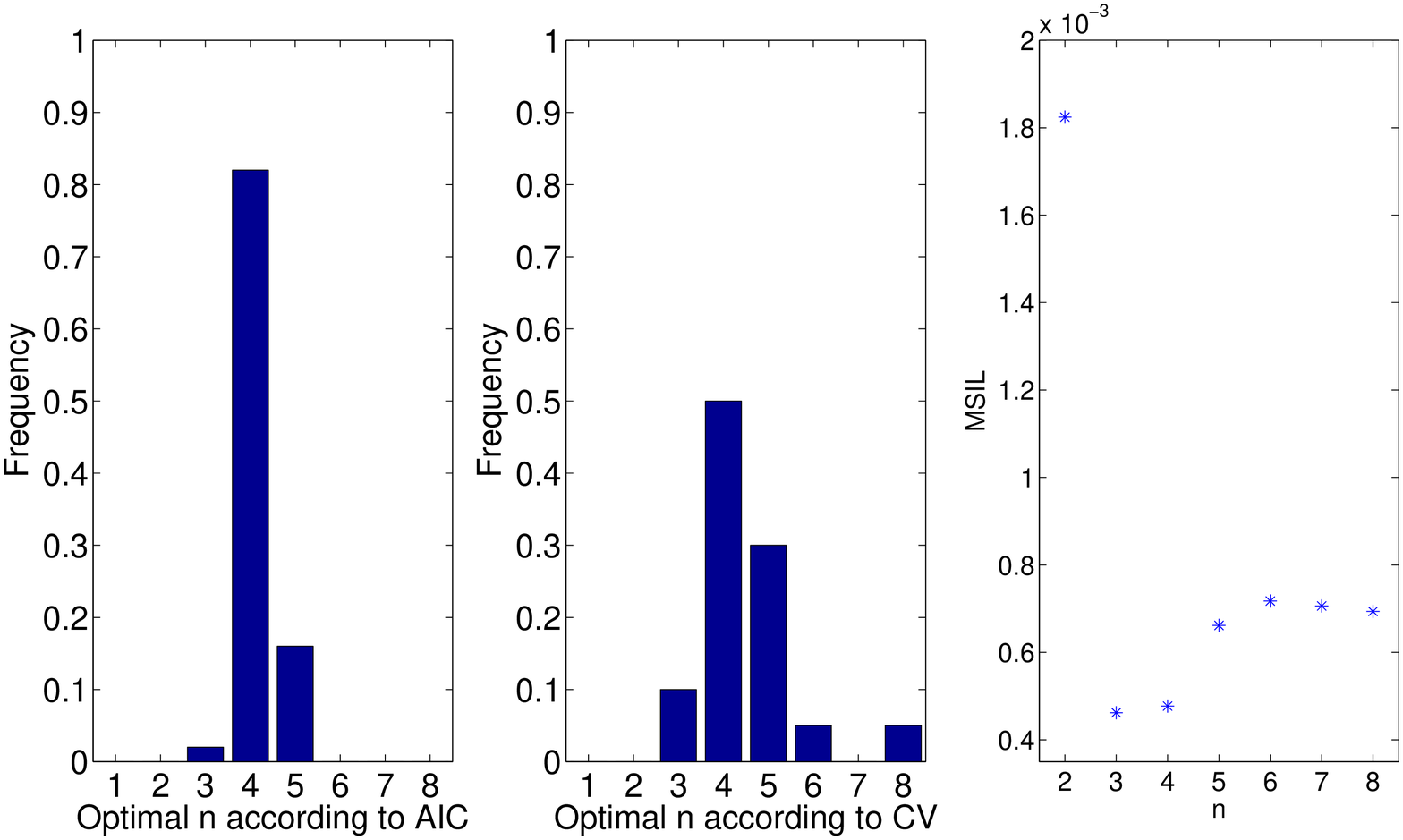}\vspace{-0.15cm}
\raisebox{3.5cm}{(c)}\includegraphics[angle=0,
width=10cm, height=4.5cm]{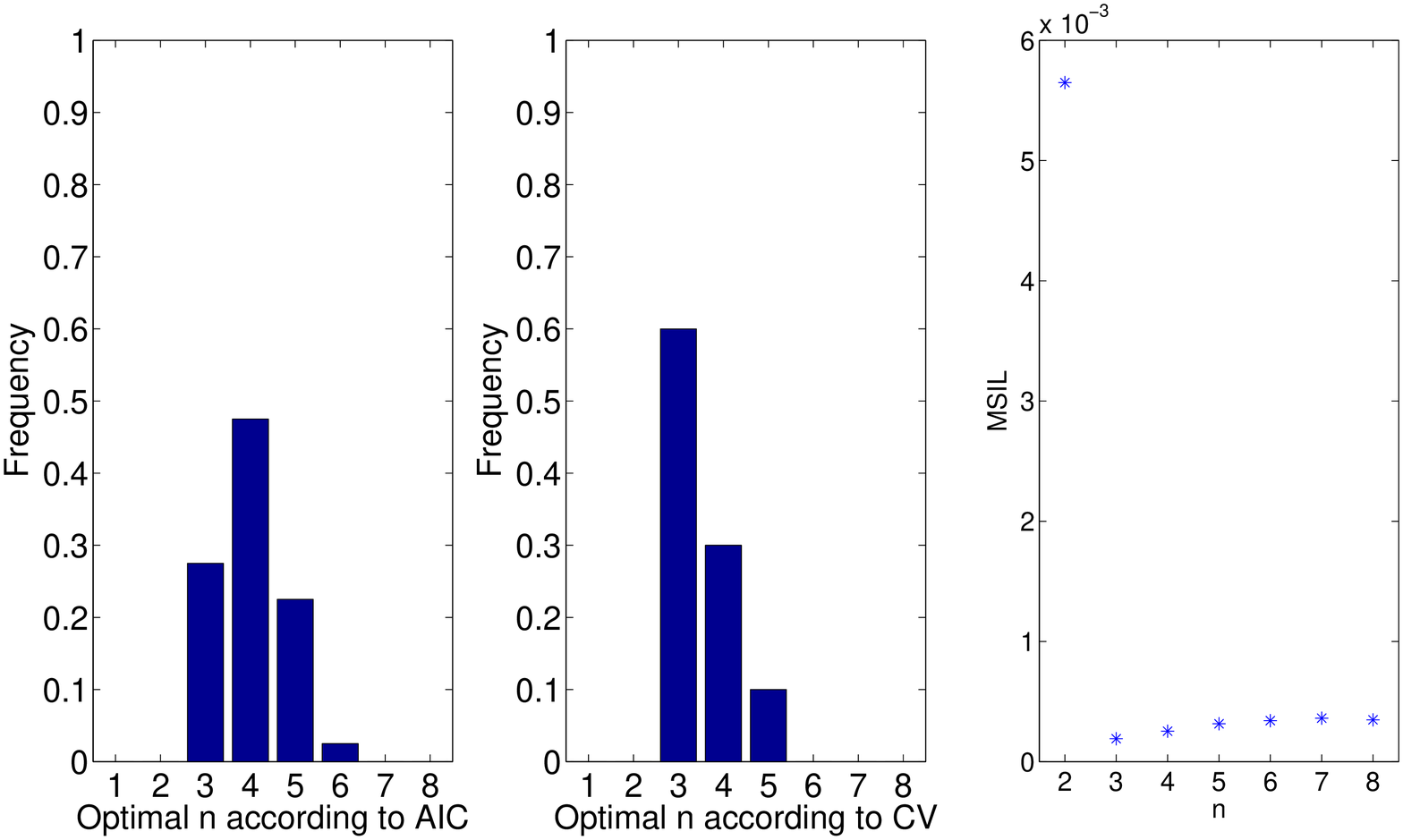}\vspace{-0.15cm}
\caption{\label{all_crit}\textbf{Examples 1 (a), 2 (b), and 3 (c).} Left: Frequency of optimal $n$ according to AIC based on 50 simulations from the true model. Middle:  Frequency of optimal $n$ according to CV based on 20 simulations. Right: MSIL for $2\leq n\leq 8$ based on 50 simulations from the true model, with $M=4$ and $K=5$ (a), $M=3$ and $K=1$ (b), and $M=3$ and $K=2$ (c). We omit the value at $n=1$ which is much larger than the value at $n=2$.}
\end{center}
\end{figure}

\end{document}